\long\def\remove#1{}
\newtheorem{theorem}{Theorem}[section] 
\newtheorem{lemma}[theorem]{Lemma}
\newtheorem{cor}[theorem]{Corollary}
\newtheorem{definition}[theorem]{Definition}
\newenvironment{proof}{{\em Proof:}}{\hfill{\hfill\rule{2mm}{2mm}}}
\definecolor{darkred}{rgb}{0.8, 0.2, 0.2}
\definecolor{darkgreen}{rgb}{0.5, 0.8, 0.1}
\definecolor{darkpurple}{rgb}{1.0, 0, 1.0}
\definecolor{darkblue}{rgb}{0, 0, 1.0}
\newcommand {\mm}[1] {\ifmmode{#1}\else{\mbox{\(#1\)}}\fi}
\newcommand{\reals}     {{\mathbb{R}}}
\newcommand{\weightfunc}    {{weight-function}\xspace}
\newcommand{\PIWK}          {{WKPI}}
\newcommand{\dgm}           {\mathrm{Dg}}
\newcommand{\pers}          {\mathrm{pers}}
\newcommand{\awfunc}        {{\alpha}}
\newcommand{\aD}            {A} 
\newcommand{\aP}            {{\mathcal{P}}} 
\newcommand{\ap}            {{\mathbf{p}}} 
\newcommand{\aI}            {{\mathrm{PI}}}
\newcommand{\aC}            {{\mathcal{C}}}
\newcommand{\myitnum}           {{\mathrm{R}}}
\newcommand{\ourw}          {{\omega}}
\newcommand{\kw}            {{k_w}}
\newcommand{\ourD}          {{\mathrm{D_\omega}}}
\newcommand{\acost}         {{cost}}
\newcommand{\TC}            {{TC}}
\newcommand{\trace}         {{\mathrm{Tr}}}
\newcommand{\altPIWK}       {{altWKPI}}
\newcommand{\trainPWGK}          {{trainPWGK}}
\newcommand{\fullversiononly}[1]        {{#1}}
\title{Learning metrics for persistence-based summaries and applications for graph classification}
\author{Qi Zhao$^*$\\ zhao.2017@osu.edu \and  Yusu Wang \thanks{Computer Science and Engineering Department, The Ohio State University, Columbus, OH 43221, USA. } \\ yusu@cse.ohio-state.edu }
\date{}
\begin{document}

\maketitle

\begin{abstract}
Recently a new feature representation framework
based on a topological tool called persistent homology (and its persistence diagram summary) has gained much momentum. 
A series of methods have been developed to map a persistence diagram to a vector representation so as to facilitate the downstream use of machine learning tools. In these approaches, the importance (weight) of different persistence features are usually \emph{pre-set}. 
However often in practice, the choice of the \weightfunc{} should depend on the nature of the specific data at hand. It is thus highly desirable to \emph{learn} a best \weightfunc{} (and thus metric for persistence diagrams) from labelled data. We study this problem and develop a new weighted kernel, called \emph{\PIWK}, for persistence summaries, 
as well as an optimization framework to learn the weight (and thus kernel). 
We apply the learned kernel to the challenging task of graph classification, and show that our \PIWK-based classification framework obtains similar or (sometimes significantly) better results than {\bf the best results} from a range of previous graph classification frameworks on benchmark datasets.
\end{abstract}

\section{Introduction}

In recent years a new data analysis methodology based on a topological tool called persistent homology has started to attract momentum. 
The persistent homology is one of the most important developments in the field of topological data analysis, 
and there have been fundamental developments both on the theoretical front (e.g, \cite{ELZ02,CZ09,CCG09,CS10,CSGO16, bhatia2018understanding}), and on algorithms / implementations (e.g, \cite{Sheehy12,PHAT,GUDHI,Simba,Kerber2018,Ripser}). 
On the high level, given a domain $X$ with a function $f: X \to \reals$ on it, the persistent homology summarizes ``features'' of $X$ across multiple scales simultaneously in a single summary called the \emph{persistence diagram} (see the second picture in Figure \ref{fig:perpipeline}). 
A persistence diagram consists of a multiset of points in the plane, where each point $p = (b, d)$ intuitively corresponds to the birth-time ($b$) and death-time ($d$) of some (topological) features of $X$ w.r.t. $f$. 
Hence it provides a concise representation of $X$, capturing \emph{multi-scale features} of it simultaneously. 
Furthermore, the persistent homology framework can be applied to complex data (e.g, 3D shapes, or graphs), and different summaries could be constructed by putting different descriptor functions on input data. 

Due to these reasons, a new persistence-based feature vectorization and data analysis framework (Figure \ref{fig:perpipeline}) has become popular. Specifically, given a collection of objects, say a set of graphs modeling chemical compounds, one can first convert each shape to a persistence-based representation. The input data can now be viewed as a set of points in a persistence-based feature space. Equipping this space with appropriate distance or kernel, one can then perform downstream data analysis tasks (e.g, clustering). 

The original distances for persistence diagram summaries unfortunately do not lend themselves easily to machine learning tasks. Hence in the last few years, starting from the persistence landscape \cite{bubenik2015Statistical}, there have been a series of methods developed to map a persistence diagram to a vector representation to facilitate machine learning tools \cite{Reininghaus2015A,Adams2017Persistence,Kusano2017Kernel,Carri2017Sliced,Levie2017CayleyNets}. 
\fullversiononly{Recent ones include Persistence Scale-Space kernel \cite{Reininghaus2015A}, Persistence Images \cite{Adams2017Persistence}, Persistence Weighted Gaussian kernel (PWGK) \cite{Kusano2017Kernel}, Sliced Wasserstein kernel \cite{Carri2017Sliced}, and Persistence Fisher kernel \cite{Le2018Persistence}. }

\begin{figure}[tbp]
\centering \includegraphics[height = 3.5cm, width = 14cm]{./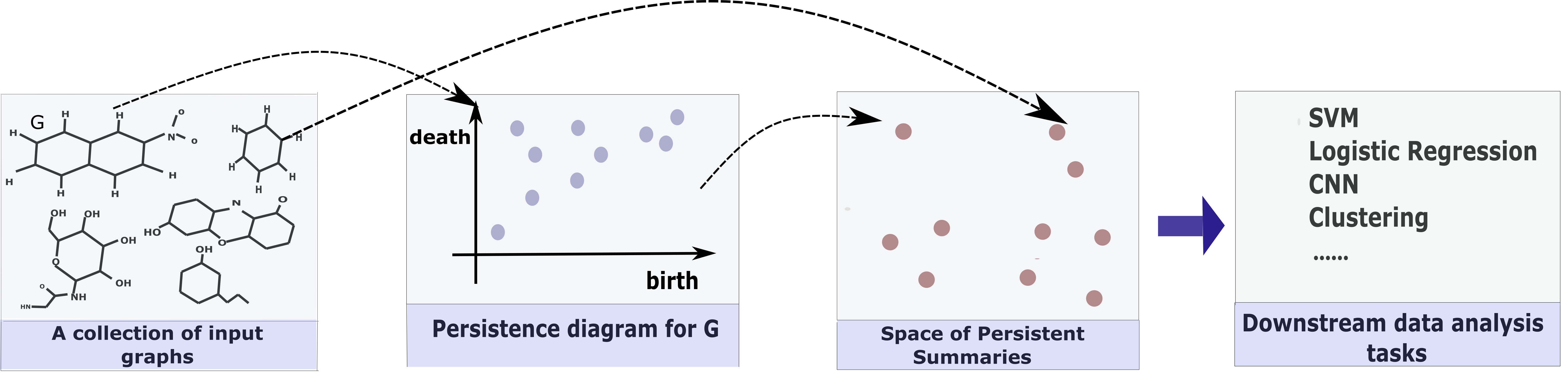}
\vspace*{-0.2in}
\caption{A persistence-based data analysis framework. 
\label{fig:perpipeline}}
\end{figure}

In these approaches, when computing the distance or kernel between persistence summaries, the importance (weight) of different persistence features are often \emph{pre-determined}. 
In persistence images \cite{Adams2017Persistence} and PWGK \cite{Kusano2017Kernel}, the importance of having a {\bf \weightfunc{}} for the birth-death plane (containing the persistence points) has been emphasized and explicitly included in the formulation of their kernels. 
However, before using these kernels, the \weightfunc{} needs to be pre-set. 

On the other hand, as recognized by \cite{Hofer2017Deep}, the choice of the \weightfunc{} should depend on the nature of the specific type of data at hand. For example, for the persistence diagrams computed from atomic configurations of molecules, features with small persistence could capture the local packing patterns which are of utmost importance and thus should be given a larger weight; while in many other scenarios, small persistence leads to noise with low importance. 
However, in general researchers performing data analysis tasks may not have such prior insights on input data. Thus it is natural and highly desirable to \emph{learn} a best \weightfunc{} from labelled data. 

\paragraph{Our work.} 
We study the problem of learning an appropriate metric (kernel) for persistence summaries from labelled data, and apply the learnt kernel to the challenging graph classification task. 

{\it (1) Metric learning for persistence summaries:~} 
We propose a new weighted-kernel (called \emph{\PIWK}), for persistence summaries based on persistence images representations. Our \PIWK{} kernel is positive semi-definite and its induced distance is stable.
A \weightfunc{} used in this kernel directly encodes the importance of different locations in the persistence diagram. 
We next model the metric learning problem for persistence summaries as the problem of learning (the parameters of) this \emph{\weightfunc{}} from a certain function class. 
In particular, the metric-learning is formulated as an optimization problem over a specific cost function we propose. This cost function has a simple matrix view which helps both conceptually clarify its meaning and simplify the implementation of its optimization.  

{\it (2) Graph classification application:~}
Given a set of objects with class labels, we first learn a best \PIWK{}-kernel as described above, and then use the learned \PIWK{} to further classify objects. We implemented this \emph{\PIWK{}-classification framework}, and apply it to a range of graph data sets. 
Graph classification is an important problem, and 
there has been a large literature on developing effective graph representations (e.g, \cite{Hido2009A, Shervashidze2009A, Bai2015A, Kriege2016On, shervashidze2011weisfeiler, Xu2015A,Neumann2012Efficient}, including the very recent persistent-homology enhanced WL-kernel \cite{riech2019persistent}), and graph neural networks (e.g, graph neural networks \cite{Yanardag2015Deep, Niepert2016Learning, Xu2018b, Verma2017, Levie2017CayleyNets,KondorSPAT18}) to classify graphs. 
The problem is challenging as graph data are less structured. We perform our \PIWK{}-classification framework on various benchmark graph data sets as well as new neuron-cell data sets. 
Our learnt \PIWK{} performs consistently better than other persistence-based kernels. 
Most importantly, when compared with existing state-of-the-art graph classification frameworks, our framework shows similar or (sometimes significantly) better performance in almost all cases than the \emph{best results} by existing approaches. 

\noindent We note that \cite{Hofer2017Deep} is the first to recognize the importance of using labelled data to learn a task-optimal representation of topological signatures. They developed an end-to-end deep neural network for this purpose, using a novel and elegant design of the input layer to implicitly learn a task-specific representation. Very recently, in a parallel and independent development of our work, Carri\`{e}re et al. \cite{carriere2019general} built an interesting new neural network based on the DeepSet architecture \cite{zaheer2017deep}, which can achieve an end-to-end learning for multiple persistence representations \emph{in a unified manner}. 
Compared to these developments, we instead explicitly formulate the metric-learning problem for persistence-summaries, and decouple the metric-learning (which can also be viewed as representation-learning) component from the downstream data analysis tasks. 
Also as shown in Section \ref{sec:exp}, our \PIWK{}-classification framework (using SVM) achieves better results on graph classification datasets. 

\begin{figure*}[t]
\begin{center}
\begin{tabular}{ccc}
\includegraphics[height=2.6cm, width=6.5cm]{./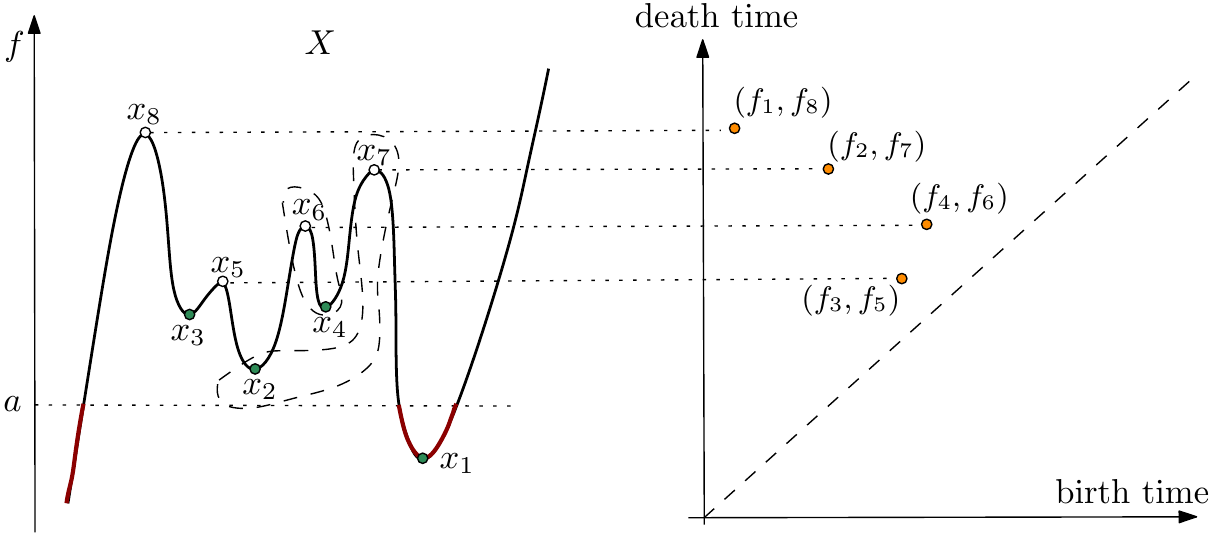} &\includegraphics[height=2.6cm, width=3.5cm]{./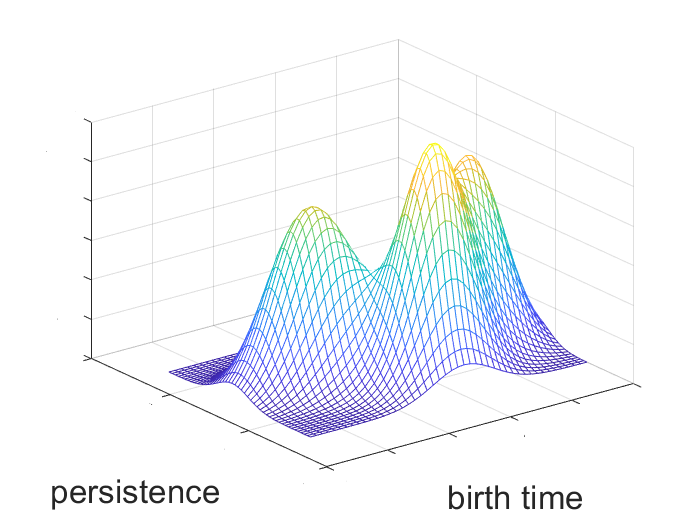} & \includegraphics[height=2.6cm, width = 3.0cm]{./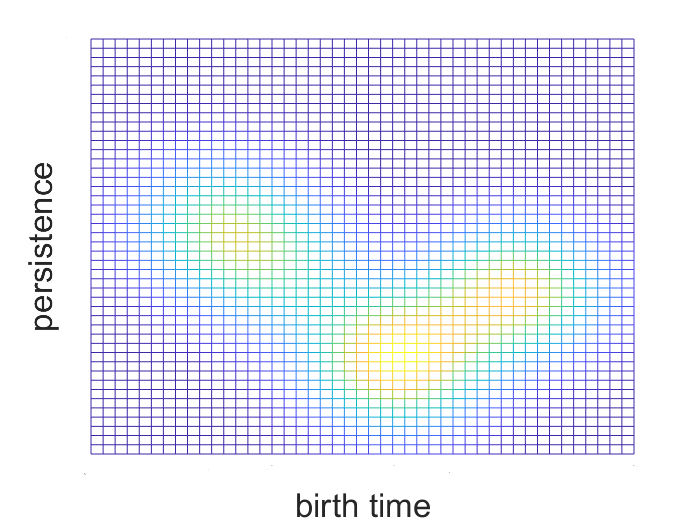}\\
(a) &  (b)  & (c)
\end{tabular}
\end{center}
\vspace*{-0.1in}\caption{(a): As we sweep the curve bottom-up in increasing $f$-values, at certain critical moments new 0-th homological features (connected components) are created, or destroyed (i.e, components merge). For example, a component is created when passing $x_4$ and killed when passing $x_6$, giving rise to the persistence-point $(f_4, f_6)$ in the persistence diagram ($f_i := f(x_i)$).   
(b) shows the graph of a persistence surface (where $z$-axis is the function $\rho_A$), and (c) is its corresponding persistence image. 
\label{fig:funcper}}
\end{figure*}

\section{Persistence-based framework}
\label{sec:background}

We first give an informal description of persistent homology below. See \cite{EH10} for more detailed exposition on the subject. 

Suppose we are given a shape $X$ (in our later graph classification application, $X$ is a graph). 
Imagine we inspect $X$ through a \emph{filtration of $X$}, which is a sequence of growing subsets of $X$: $X_1 \subseteq X_2 \subseteq \cdots \subseteq X_n = X$. 
As we scan $X$, sometimes a new feature appears in $X_i$, and sometimes an existing feature  disappears upon entering $X_j$. 
Using the topological object called homology classes to describe these features (intuitively components, independent loops, voids, and their high dimensional counter-parts), the birth and death of topological features can be captured by the \emph{persistent homology}, in the form of a \emph{persistence diagram} $\dgm X$. 
Specifically, for each dimension $k$, $\dgm_k X$ consists of a multi-set of points in the plane (which we call the \emph{birth-death plane $\reals^2$}): each point ($b,d$) in it, called a \emph{persistence-point}, indicates that a certain $k$-dimensional homological feature is created upon entering $X_b$ and destroyed upon entering $X_d$. 
In the remainder of the paper, we often omit the dimension $k$ for simplicity: when multiple dimensions are used for persistence features, we will apply our construction to each dimension and concatenate the resulting vector representations. 

A common way to obtain a meaningful filtration of $X$ is via the \emph{sublevel-set filtration} induced by a \emph{descriptor function} $f$ on $X$. 
More specifically, given a function $f: X \to \reals$, let $X_{\le a} := \{ x\in X \mid f(x) \le a\}$ be its \emph{sublevel-set at $a$}. 
Let $a_1 < a_2 < \cdots < a_n$ be $n$ real values. The sublevel-set filtration w.r.t. $f$ is:  
    $X_{\le a_1} \subseteq X_{\le a_2} \subseteq \cdots \subseteq X_{\le a_n}; $
and its persistence diagram is denoted by $\dgm f$. 
Each persistence-point $p=(a_i, a_j) \in \dgm f$ indicates the function values when some topological features are created (when entering $X_{\le a_i}$) and destroyed (in $X_{\le a_j}$), and the \emph{persistence} of this feature is its life-time $\pers(p) = |a_j - a_i|$. 
See Figure \ref{fig:funcper} (a) for a simple example where $X = \reals$. 
If one sweeps $X$ top-down in decreasing function values, one gets the persistence diagram induced by the super-levelset filtration of $X$ w.r.t. $f$ in an analogous way. Finally, if one tracks the change of topological features in the \emph{levelset $f^{-1}(a)$}, one obtains the so-called \emph{levelset zigzag persistence} \cite{CSM09} (which contains the information captured by the \emph{extended persistence} \cite{CEH09}). 

\paragraph{Graph Setting.} 
 Given a graph $G=(V, E)$, a descriptor function $f$ defined on $V$ or $E$ will induce a filtration and its persistence diagrams. Suppose $f: V \to \reals$ is defined on the node set of $G$ (e.g, the node degree). Then we can extend $f$ to $E$ by setting $f(u,v) = max\{f(u), f(v)\}$, and the sublevel-set at $a$ is defined as $G_{\le a}:= \{ \sigma \in V \cup E \mid f(\sigma) \le a\}$. Similarly, if we are given $f: E \to \reals$, then we can extend $f$ to $V$ by setting $f(u) = \min_{u\in e, e\in E} f(e)$. 
When scanning $G$ via the sublevel-set filtration of $f$, connected components in the swept subgraphs will be created and merged, and new cycles will be created. 
The formal events are encoded in the 0-dimensional persistence diagram. The the 1-dimensional features (cycles), however, we note that cycles created will never be killed, as they are present in the total space $X = G$. To this end, we use the so-called \emph{extended persistence} introduced in \cite{CEH09} which can record information of cycles. 

Now given a collection of shapes $\Xi$, we can compute a persistence diagram $\dgm X$ for each $X \in \Xi$, which maps the set $\Xi$ to a set of points in the space of persistence diagrams. 
There are natural distances defined for persistence diagrams, including the bottleneck distance and the Wasserstein distance, both of which have been well studied (e.g, stability under them \cite{CEH07,CEHY10,CSGO16}) with efficient implementations available \cite{KD17, Hera}. 
However, to facilitate downstream machine learning tasks, it is desirable to further map the persistence diagrams to another ``vector'' representation. 
Below we introduce one such representation, called the persistence images \cite{Adams2017Persistence}, as our new kernel is based on it. 

\fullversiononly{\paragraph{Persistence images.}} 
Let $\aD$ be a persistence diagram (containing a multiset of persistence-points). Following \cite{Adams2017Persistence}, set $T: \reals^2 \to \reals^2$ to be the linear transformation\footnote{In fact, we can define our kernel without transforming the persistence diagram. We use the transformation simply to follow the same convention as persistence images.} where for each $(x,y) \in \reals^2$, $T(x, y) = (x, y-x)$.
Let $T(\aD)$ be the transformed diagram of $\aD$. 
Let ${\phi}_u: \reals^2\rightarrow \reals$ be a differentiable probability distribution with mean $u \in \reals^2$ (e.g, the normalized Gaussian where for any $z\in \reals^2$, $\phi_u(z) = \frac{1}{2\pi \tau^2} e^{-\frac{\|z - u\|^2}{2\tau^2}})$. 

\begin{definition}[\cite{Adams2017Persistence}] \label{def:persurface}
Let $\awfunc: \reals^2 \to \reals$ be a non-negative \weightfunc{} for the persistence plane $\reals^2$. Given a persistence diagram $\aD$, its \emph{persistence surface} ${\rho}_\aD: \reals^2 \to \reals$ (w.r.t. $\awfunc$) is defined as: for any $z\in \reals^2$,  
$\rho_\aD(z) = \sum_{u \in T(\aD)} \awfunc(u) \phi_u (z).$ 

The persistence image is a discretization of the persistence surface.
Specifically, fix a grid on a rectangular region in the plane with a collection $\aP$ of $N$ rectangles (pixels). The \emph{persistence image} for a diagram $\aD$ is $\aI_\aD = \{~\aI[\ap]~ \}_{\ap \in \aP}$ consists of $N$ numbers (i.e, a vector in $\mathbb{R}^N$), one for each pixel $\ap$ in the grid $\aP$ with $\aI[\ap] :=\iint_\ap {\rho}_\aD ~dy dx.$ 
\end{definition}

\fullversiononly{The persistence image can be viewed as a vector in $\reals^N$. One can then compute distance between two persistence diagrams $\aD_1$ and $\aD_2$ by the $L_2$-distance $\|\aI_1 - \aI_2\|_2$ between their persistence images (vectors) $\aI_1$ and $\aI_2$. 
The persistence images have several nice properties, including  stability guarantees; see \cite{Adams2017Persistence} for more details. } 

\section{Metric learning frameworks} 
\label{sec:metriclearning}

Suppose we are given a set of $n$ objects $\Xi$ (sampled from a hidden data space $\mathcal{S}$), classified into $k$ classes. 
We want to use these labelled data to learn a good distance for (persistence image representations of) objects from $\Xi$ 
which hopefully is more appropriate at classifying objects in the data space $\mathcal{S}$. 
To do so, below we propose a new persistence-based kernel for persistence images, and then formulate an optimization problem to learn the best \weightfunc{} so as to obtain a good distance metric for $\Xi$ (and data space $\mathcal{S}$). 

\subsection{Weighted persistence image kernel (\PIWK{})}
\label{sec:newkernel}

From now on, we fix the grid $\aP$ (of size $N$) to generate persistence images (so a persistence image is a vector in $\reals^N$). Let $p_s$ be the center of the $s$-th pixel $\ap_s$ in $\aP$, for $s \in \{1,2,\cdots, N\}$. 
We now propose a new kernel for persistence images. A \emph{\weightfunc{}} refers to a non-negative real-valued function on $\reals^2$. 
\begin{definition}\label{def:newkernel}
Let $\ourw: \reals^2 \to \reals$ be a \weightfunc{}. Given two persistence images $\aI$ and $\aI'$, the \emph{($\ourw$-)weighted persistence image kernel (\PIWK)} is defined as: 
     $\kw(\aI, \aI') ~:= \sum_{s=1}^N \ourw(p_s) e^{-\frac{(\aI(s) - \aI'(s))^2}{2\sigma^2}}.$ 
\end{definition}

\fullversiononly{\noindent\underline{\emph{Remark 0:}} We could use the persistence surfaces (instead of persistence images) to define the kernel (with the summation replaced by an integral). Since for computational purpose, one still needs to approximate the integral in the kernel via some discretization, we choose to present our work using persistence images directly. Our Lemma \ref{thm:psd} and Theorem \ref{thm:stability} still hold (with slightly different stability bound) if we use the kernel defined for persistence surfaces. }

\noindent{\underline{\emph{Remark 1:}}} One can choose the \weightfunc{} from different function classes. Two popular choices are: mixture of $m$ 2D Gaussians; and degree-$d$ polynomials on two variables. 

\noindent{\underline{\it Remark 2:}} There are other natural choices for defining a weighted kernel for persistence images. For example, we could use $k(\aI, \aI') = \sum_{s=1}^N e^{-\frac{\ourw(p_s) (\aI(s) - \aI'(s))^2}{2\sigma^2}}$, which we refer this as \emph{\altPIWK}. 
Alternatively, one could use the weight function used in 
PWGK kernel \cite{Kusano2017Kernel} directly. 
Indeed, we have implemented all these choices, and our experiments show that our \PIWK{} kernel leads to better results than these choices for almost all datasets (see Appendix Section 2). 
In addition, note that PWGK kernel \cite{Kusano2017Kernel} contains cross terms $\ourw(x) \cdot \ourw(y)$ in its formulation, meaning that there are quadratic number of terms (w.r.t the number of persistence points) to calculate the kernel, making it more expensive to compute and learn for complex objects (e.g, for the neuron data set, a single neuron tree could produce a persistence diagrams with hundreds of persistence points). 

\begin{lemma}\label{thm:psd}
The \PIWK{} kernel is positive semi-definite. 
\end{lemma}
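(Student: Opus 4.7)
The plan is to decompose the kernel into a sum of elementary positive semi-definite (PSD) pieces and exploit closure properties of the class of PSD kernels. Recall that the class of PSD kernels is closed under (i) multiplication by a non-negative scalar and (ii) addition. Moreover, the one-dimensional Gaussian kernel $K_\sigma(u,v) := e^{-(u-v)^2/(2\sigma^2)}$ on $\reals$ is a classical example of a PSD kernel (for instance, via its Bochner-type representation as the Fourier transform of a Gaussian, or equivalently via its explicit feature map into an $L^2$-space).

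For each pixel index $s \in \{1,\dots,N\}$, define the auxiliary kernel $k_s$ on persistence images by
\[
k_s(\aI, \aI') \;:=\; \ourw(p_s)\, e^{-\frac{(\aI(s)-\aI'(s))^2}{2\sigma^2}} \;=\; \ourw(p_s)\, K_\sigma\!\bigl(\aI(s),\,\aI'(s)\bigr).
\]
The map $\aI \mapsto \aI(s)$ is just the projection onto the $s$-th coordinate, so $K_\sigma(\aI(s),\aI'(s))$ is the pullback of the one-dimensional Gaussian kernel under this projection; pullbacks of PSD kernels by an arbitrary map remain PSD. Since $\ourw(p_s) \ge 0$ by the assumption that $\ourw$ is a \weightfunc{} (non-negative), $k_s$ is PSD by closure under multiplication by a non-negative scalar.

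Finally, $\kw = \sum_{s=1}^N k_s$ is a finite sum of PSD kernels and hence PSD. Concretely, for any finite collection of persistence images $\aI_1,\dots,\aI_m$ and any coefficients $c_1,\dots,c_m \in \reals$, one verifies
\[
\sum_{i,j=1}^m c_i c_j\, \kw(\aI_i, \aI_j) \;=\; \sum_{s=1}^N \ourw(p_s) \sum_{i,j=1}^m c_i c_j\, e^{-\frac{(\aI_i(s)-\aI_j(s))^2}{2\sigma^2}} \;\ge\; 0,
\]
where the inner double sum is non-negative because the one-dimensional Gaussian kernel evaluated on the scalars $\aI_1(s),\dots,\aI_m(s)$ yields a PSD Gram matrix.

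There is no real obstacle here; the only point worth stating carefully is that the Gaussian factor depends on $\aI,\aI'$ only through the scalar values $\aI(s)$ and $\aI'(s)$, so the standard PSD-ness of the one-dimensional Gaussian kernel applies directly, rather than needing a multivariate argument. The role of the weight $\ourw$ is trivial since it enters only as a non-negative constant (in $\aI,\aI'$) for each fixed $s$.
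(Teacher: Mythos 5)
Your proposal is correct and matches the paper's argument: the appendix proof likewise swaps the order of summation to write $v^T K v = \sum_{s}\ourw(p_s)\sum_{i,j}v_i v_j e^{-(\aI_i(s)-\aI_j(s))^2/(2\sigma^2)}$ and concludes from the positive semi-definiteness of the one-dimensional Gaussian kernel together with the non-negativity of $\ourw$. Your framing via closure properties and pullbacks is just a slightly more structured phrasing of the same computation.
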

The rather simple proof of the above lemma is in Appendix Section 1.1. 
By Lemma \ref{thm:psd}, the \PIWK{} kernel gives rise to a Hilbert space. We can now introduce the \PIWK-distance, which is the \emph{pseudo-metric} induced by the inner product on this Hilbert space. 
\begin{definition}\label{def:newdistance}
Given two persistence diagrams $A$ and $B$, let $\aI_A$ and $\aI_B$ be their corresponding persistence images. 
Given a \weightfunc{} $\omega: \reals^2\to \reals$, the \emph{($\omega$-weighted) \PIWK-distance} is:
\begin{align*}
    \ourD(A, B) &:= \sqrt{ \kw(\aI_A, \aI_A) + \kw(\aI_B, \aI_B) - 2 \kw(\aI_A, \aI_B) }. 
\end{align*}
\end{definition}

\paragraph{Stability of \PIWK-distance.}
Given two persistence diagrams $A$ and $B$, two traditional distances between them are the bottleneck distance $d_B(A, B)$ and the $p$-th Wasserstein distance $d_{W,p}(A, B)$. 
\fullversiononly{Stability of these two distances w.r.t. changes of input objects or functions defined on them have been studied \cite{CEH07,CEHY10,CSGO16}.}
Similar to the stability study on persistence images, below we prove \PIWK-distance is stable w.r.t. small perturbation in persistence diagrams as measured by $d_{W,1}$. (Very informally, view two persistence diagrams $A$ and $B$ as two (appropriate) measures (with special care taken to the diagonals), and $d_{W,1}(A, B)$ is roughly the ``earth-mover'' distance between them to convert the measure corresponding to $A$ to that for $B$.)

To simplify the presentation of Theorem \ref{thm:stability}, we use \emph{unweighted persistence images w.r.t. Gaussian}, meaning in Definition \ref{def:persurface}, (1) the weight function $\awfunc$ is the constant function $\awfunc = 1$; and (2) the distribution $\phi_u$ is the Gaussian $\phi_u(z) = \frac{1}{2\pi \tau^2} e^{-\frac{\|z - u\|^2}{2\tau^2}}$. 
(Our result below can be extended to the case where $\phi_u$ is not Gaussian.)
The proof of the theorem below follows from results of \cite{Adams2017Persistence} and can be found in Appendix Section 1.2. 

\begin{theorem}\label{thm:stability}
Given a \weightfunc{} $\ourw: \reals^2 \to \reals$, set $c_w = \| \ourw\|_\infty = \sup_{z\in\reals^2} \ourw(z)$.  
Given two persistence diagrams $A$ and $B$, with corresponding persistence images $\aI_A$ and $\aI_B$, we have that: 
$ \ourD(A, B) \le \sqrt{\frac{20 c_w}{\pi}} \cdot \frac{1}{\sigma \cdot \tau} \cdot d_{W,1}(A, B), 
$
where $\sigma$ is the width of the Gaussian used to define our \PIWK{} kernel (Def. \ref{def:newkernel}), and $\tau$ is that for the Gaussian $\phi_u$ to define persistence images (Def. \ref{def:persurface}). 
\end{theorem}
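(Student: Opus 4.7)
The plan is to square $\ourD(A,B)$ and reduce the stability bound to a Lipschitz estimate of the (unweighted) persistence image map $\aD \mapsto \aI_\aD$ under $d_{W,1}$, which is exactly the content of the stability result in \cite{Adams2017Persistence}. More concretely, I first expand using Definition \ref{def:newkernel}:
\begin{align*}
\ourD(A,B)^2 = \sum_{s=1}^N \ourw(p_s)\Bigl[2 - 2 e^{-\frac{(\aI_A(s) - \aI_B(s))^2}{2\sigma^2}}\Bigr],
\end{align*}
where I have used that the self-kernel contributions $\kw(\aI_A,\aI_A)$ and $\kw(\aI_B,\aI_B)$ each equal $\sum_s \ourw(p_s)$.

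Next I apply the elementary inequality $1 - e^{-t} \le t$ for $t \ge 0$ to each term, and pull out $c_w = \|\ourw\|_\infty$:
\begin{align*}
\ourD(A,B)^2 \;\le\; \frac{1}{\sigma^2} \sum_{s=1}^N \ourw(p_s)\, (\aI_A(s) - \aI_B(s))^2 \;\le\; \frac{c_w}{\sigma^2}\, \|\aI_A - \aI_B\|_2^2.
\end{align*}
This reduces the claim to showing $\|\aI_A - \aI_B\|_2 \le \sqrt{20/\pi}\, \tau^{-1}\, d_{W,1}(A,B)$ for the unweighted, Gaussian persistence image (with $\awfunc \equiv 1$ and bandwidth $\tau$).

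The remaining step is to invoke (or re-derive) the stability estimate from \cite{Adams2017Persistence}. I would first bound $\|\aI_A - \aI_B\|_2 \le \|\rho_A - \rho_B\|_{L_2(\reals^2)}$ since each pixel value is an average of $\rho$ over a unit-measure-normalized region, making discretization a contraction in $L_2$ (up to area weighting that I would handle by choosing the image defined as the integral over each pixel without further normalization, as in Definition \ref{def:persurface}). Then, along an optimal $1$-Wasserstein matching between $T(A)$ and $T(B)$ (adding diagonal mass in the usual way), I would write $\rho_A - \rho_B$ as a sum of differences $\phi_u - \phi_{u'}$ and use that $u \mapsto \phi_u$ is Lipschitz in $L_2$ with constant governed by $1/(\tau\sqrt{\pi})$. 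Summing over the matching yields the $d_{W,1}$ bound with an explicit constant that, after combining with the $\sqrt{c_w}/\sigma$ factor above, produces exactly $\sqrt{20 c_w/\pi}\cdot (\sigma \tau)^{-1}$.

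The routine part is the first reduction; the main obstacle, and the place where the constant $\sqrt{20/\pi}$ comes in, is the Lipschitz estimate $\|\phi_u - \phi_{u'}\|_{L_2} \le C\, \|u - u'\|/\tau$ and its clean passage to $d_{W,1}$ via an optimal matching. Since this is the content of Theorem 10 in \cite{Adams2017Persistence}, I would cite it directly rather than re-derive the explicit constant, and merely verify that (i) the linear transformation $T$ is a $d_{W,1}$-isometry up to a factor absorbed into the constant, and (ii) the pixel-discretization step preserves the $L_2$ bound, so that the composition yields the theorem as stated.
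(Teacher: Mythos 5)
Your proposal is correct and follows essentially the same route as the paper's proof: expand $\ourD^2(A,B)$ using the kernel definition, apply $1-e^{-t}\le t$, factor out $c_w$ to reduce to $\|\aI_A-\aI_B\|_2$, and invoke Theorem 10 of Adams et al.\ for the bound $\|\aI_A-\aI_B\|_2\le\sqrt{10/\pi}\cdot\tau^{-1}\cdot d_{W,1}(A,B)$. The only minor bookkeeping point is that the cited image-stability constant is $\sqrt{10/\pi}$ rather than the $\sqrt{20/\pi}$ you attribute to it; the extra factor of $2$ in the theorem's stated constant comes from the kernel expansion, and your version of the first reduction actually yields the slightly stronger constant $\sqrt{10\,c_w/\pi}$.
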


\fullversiononly{\noindent{\underline{\emph{Remarks:}}} 
We can obtain a more general bound for the case where the distribution $\phi_u$ is not Gaussian. 
Furthermore, we can obtain a similar bound when our \PIWK{}-kernel and its induced \PIWK-distance is defined using \emph{persistence surfaces} instead of \emph{persistence images}.} 

\subsection{Optimization problem for metric-learning}
\label{subsec:optimization}

Suppose we are given a collection of objects $\Xi = \{X_1, \ldots, X_n\}$ (sampled from some hidden data space $\mathcal{S}$), already classified (labeled) to $k$ classes $\aC_1, \ldots, \aC_k$. 
In what follows, we say that $i \in \aC_j$ if $X_i$ has class-label $j$. 
We first compute the persistence diagram $A_i$ for each object $X_i \in \Xi$. (The precise filtration we use to do so will depend on the specific type of objects. Later in Section \ref{sec:exp}, we will describe filtrations used for graph data). 
Let $\{A_1, \ldots, A_n\}$ be the resulting set of persistence diagrams. 
Given a \weightfunc{} $\ourw$, its induced \PIWK{}-distance between $A_i$ and $A_j$ can also be thought of as a distance for the original objects $X_i$ and $X_j$; that is, we can set $\ourD(X_i, X_j) := \ourD(A_i, A_j)$. 
Our goal is to learn a good distance metric for the data space $\mathcal{S}$ (where $\Xi$ are sampled from) from the labels. 
We will formulate this as learning a best \weightfunc{} $\ourw^*$ so that its induced \PIWK{}-distance fits the class-labels of $X_i$'s best. Specifically, for any $t \in \{1,2, \cdots, k\}$, set: 
\begin{align*}
\acost_\ourw(t, t) &= \sum_{i, j \in \aC_t} \ourD^2(A_i, A_j); ~~~~\text{and}~~~~
\acost_\ourw(t, \cdot) ~= \sum_{i\in \aC_t, j \in \{1,2,\cdots, n\}} \ourD^2(A_i, A_j). 
\end{align*}
Intuitively, $\acost_\ourw(t, t)$ is the total in-class (square) distances for $\aC_t$; while $\acost_\ourw(t, \cdot)$ is the total distance from objects in class $\aC_t$ to all objects in $\Xi$. %
A good metric should lead to relatively smaller distance between objects from the same class, but larger distance between objects from different classes. 
We thus propose the following optimization problem, which is related to $k$-way spectral clustering where the distance for an edge $(A_i, A_j)$ is $D_\omega^2(A_i, A_j)$:  
%
\begin{definition}[Optimization problem]\label{def:optproblem}
Given a \weightfunc{} $\ourw: \reals^2 \to \reals$, the \emph{total-cost} of its induced \PIWK{}-distance over $\Xi$ is defined as: 
$\TC(\ourw) := \sum_{t=1}^k \frac{\acost(t, t)}{\acost(t, \cdot)}.$
The \emph{optimal distance problem} aims to find the best \weightfunc{} $\ourw^*$ from a certain function class $\mathcal{F}$ so that the total-cost is minimized; that is: 
$ \TC^* = \min_{\ourw \in \mathcal{F}} \TC(\ourw); ~~\text{and}~~ \ourw^* = \mathrm{argmin}_{\ourw \in \mathcal{F}} \TC(\ourw). 
$
\end{definition}

\paragraph{Matrix view of optimization problem.} 
We observe that our cost function can be re-formulated into a matrix form. This provides us with a perspective from the Laplacian matrix of certain graphs to understand the cost function, and helps to simplify the implementation of our optimization problem, as several programming languages popular in machine learning (e.g Python and Matlab) handle matrix operations more efficiently (than using loops). 
More precisely, recall our input is a set $\Xi$ of $n$ objects with labels from $k$ classes. 
We set up the following matrices: 
\begin{equation*}
\begin{aligned}
L = G - \Lambda; ~~~~\Lambda &= \big[ \Lambda_{ij} \big]_{n \times n}, ~~ \text{where}~ \Lambda_{ij} = \ourD^2(A_i, A_j)~~\text{for}~ i, j\in \{1,2,\cdots,n\}; \\
 G &= \big[ g_{ij} \big]_{n \times n}, \quad \text{where}~~ g_{ij} = 
\begin{cases}
\sum_{\ell = 1}^n \Lambda_{i\ell} & \text{if}~i = j\\
0 & \text{if}~i \neq j
\end{cases}
\\
H &= \big[ h_{ti} \big]_{k \times n} \quad \text{where}~~ h_{ti} = 
\begin{cases}
\frac{1}{\sqrt{\acost_\ourw(t, \cdot) }} & i \in \aC_t\\
0 & otherwise
\end{cases}
\end{aligned}
\end{equation*}

Viewing $\Lambda$ as distance matrix of objects $\{X_1, \ldots, X_n\}$, $L$ is then its Laplacian matrix. 
We have the following main theorem, which essentially is similar to the trace-minimization view of $k$-way spectral clustering (see e.g, Section 6.5 of \cite{kokiopoulou2011trace}).
The proof for our specific setting is in Appendix 1.3. 

\begin{theorem}\label{thm:matrixview}
The total-cost can also be represented by $\TC(\ourw) = k - \trace(HLH^T)$, where $\trace(\cdot)$ is the trace of a matrix. 
Furthermore, $HGH^T = \mathbf{I}$, where $\mathbf{I}$ is the $k\times k$ identity matrix. 
\end{theorem}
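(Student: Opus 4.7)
} The plan is a direct computation, exploiting (i) linearity of trace, (ii) the fact that $G$ is diagonal, and (iii) the fact that the classes $\aC_1,\ldots,\aC_k$ partition $\{1,\ldots,n\}$, so for each object index $i$ at most one of the entries $h_{1i},\ldots,h_{ki}$ is nonzero. I would split the theorem into the two claims $HGH^T = \mathbf{I}$ and $\trace(H\Lambda H^T) = \TC(\ourw)$, and then combine them via $L = G - \Lambda$.

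First I would prove $HGH^T = \mathbf{I}$. Since $G$ is diagonal, the $(t,s)$ entry of $HGH^T$ collapses to $\sum_{i} h_{ti}\, g_{ii}\, h_{si}$. For $t \neq s$, no index $i$ lies in both $\aC_t$ and $\aC_s$, so the sum is empty and the entry is $0$. For $t = s$, only indices $i \in \aC_t$ contribute, giving
\[
(HGH^T)_{tt} \;=\; \frac{1}{\acost_\ourw(t,\cdot)} \sum_{i \in \aC_t} g_{ii} \;=\; \frac{1}{\acost_\ourw(t,\cdot)} \sum_{i \in \aC_t} \sum_{\ell=1}^n \Lambda_{i\ell},
\]
and by the very definition of $\acost_\ourw(t,\cdot)$ the double sum equals $\acost_\ourw(t,\cdot)$, so the entry is $1$.

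Next I would compute $\trace(H\Lambda H^T)$. The $(t,t)$ entry of $H\Lambda H^T$ is $\sum_{i,j} h_{ti}\Lambda_{ij} h_{tj}$, which restricts to $i,j \in \aC_t$ and produces $\frac{1}{\acost_\ourw(t,\cdot)}\sum_{i,j \in \aC_t} \Lambda_{ij} = \frac{\acost_\ourw(t,t)}{\acost_\ourw(t,\cdot)}$. Summing over $t$ gives exactly $\TC(\ourw)$ as defined in Definition \ref{def:optproblem}.

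Finally I would combine: by linearity of trace,
\[
\trace(HLH^T) \;=\; \trace(HGH^T) - \trace(H\Lambda H^T) \;=\; \trace(\mathbf{I}_k) - \TC(\ourw) \;=\; k - \TC(\ourw),
\]
yielding the desired identity. There is no real obstacle here; the only thing to be careful about is the indexing, in particular keeping $G$ diagonal vs.\ $\Lambda$ symmetric straight, and remembering that each column of $H$ has support on exactly one row so cross-class terms vanish.
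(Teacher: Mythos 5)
Your proof is correct, but it takes a different (and tidier) route than the paper's. The paper first establishes the Laplacian quadratic-form identity $f^T L f = \tfrac{1}{2}\sum_{i,j}\Lambda_{ij}(f_i - f_j)^2$ as a separate lemma, applies it to each row $h_t$ of $H$, and then expands $(h_{t,j_1}-h_{t,j_2})^2$ with some careful index bookkeeping over in-class versus out-of-class pairs to reach $\sum_t \frac{\acost_\ourw(t,\cdot)-\acost_\ourw(t,t)}{\acost_\ourw(t,\cdot)} = k - \TC(\ourw)$; the $HGH^T=\mathbf{I}$ claim is then proved separately exactly as you do. You instead split at the trace level, $\trace(HLH^T)=\trace(HGH^T)-\trace(H\Lambda H^T)$, compute $\trace(H\Lambda H^T)=\sum_t \acost_\ourw(t,t)/\acost_\ourw(t,\cdot)=\TC(\ourw)$ directly, and reuse $HGH^T=\mathbf{I}$ for the first term — so the identity $HGH^T=\mathbf{I}$ does double duty and no quadratic-form lemma is needed. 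Your version is shorter and avoids the most error-prone expansion; what the paper's version buys is the explicit link to the graph-Laplacian / spectral-clustering viewpoint (and, as a by-product of the lemma, positive semi-definiteness of $L$), which motivates the trace-minimization formulation in Corollary \ref{cor:matrixoptimization}. Both arguments rely on the same two structural facts you flag: $G$ is diagonal with $g_{ii}=\sum_\ell \Lambda_{i\ell}$, and each column of $H$ is supported on exactly one row because the classes partition $\{1,\ldots,n\}$.
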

Note that all matrices, $L, G, \Lambda,$ and $H$, are dependent on the (parameters of) \weightfunc{} $\ourw$, and in the following corollary of Theorem \ref{thm:matrixview}, we use the subscript of $\ourw$ to emphasize this dependence.
\begin{cor}\label{cor:matrixoptimization}
The Optimal distance problem is equivalent to 
$$\min_{\ourw} \big( k - \trace(H_\ourw L_\ourw H_\ourw^T) \big), ~~~\text{subject to}~~ H_\ourw G_\ourw H_\ourw^T = \mathbf{I}. $$
\end{cor}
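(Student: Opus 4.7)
The plan is to derive this corollary as an essentially immediate consequence of Theorem \ref{thm:matrixview}, treating it as a reformulation in matrix language rather than a substantive new result. First, I would substitute the trace identity $\TC(\ourw) = k - \trace(H_\ourw L_\ourw H_\ourw^T)$ supplied by Theorem \ref{thm:matrixview} directly into the definition of the Optimal distance problem from Definition \ref{def:optproblem}. Since that identity holds for every admissible $\ourw \in \mathcal{F}$, the two objectives agree pointwise on $\mathcal{F}$, so their minima (and argmins) coincide.

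Second, I would address the apparent addition of the constraint $H_\ourw G_\ourw H_\ourw^T = \mathbf{I}$. The second half of Theorem \ref{thm:matrixview} asserts that this equality is automatically satisfied for every $\ourw$, by construction of $H_\ourw$ and $G_\ourw$ from the distance matrix $\Lambda$. Hence adjoining this constraint does not remove any feasible point: the feasible set of the constrained problem is still all of $\mathcal{F}$, and the pointwise equality of objectives gives the equivalence claimed in the corollary.

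I do not expect any technical obstacle, since every ingredient needed is already contained in Theorem \ref{thm:matrixview}; the argument is essentially a one-line substitution combined with the observation that the stated constraint is automatic. The only expository point worth flagging in the write-up is the \emph{motivation} for presenting the problem in this constrained form. Exposing $H_\ourw G_\ourw H_\ourw^T = \mathbf{I}$ explicitly, even though it is automatic, makes the parallel with the standard trace-minimization formulation of $k$-way spectral clustering transparent. This is what will later allow the optimization over $\ourw$ to be handled with techniques borrowed from that literature, so the corollary is best read as a repackaging of Theorem \ref{thm:matrixview} into a form that is ready for algorithmic use.
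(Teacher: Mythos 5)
Your proposal is correct and matches the paper's treatment: the corollary is an immediate restatement of Theorem \ref{thm:matrixview}, obtained by substituting the trace identity into the objective of Definition \ref{def:optproblem} and observing that the constraint $H_\ourw G_\ourw H_\ourw^T = \mathbf{I}$ holds automatically (the paper notes it is guaranteed whenever the \weightfunc{} is non-negative). No gap.
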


\paragraph{Solving the optimization problem.} 
In our implementation, we use (stochastic) gradient descent to find a (locally) optimal \weightfunc{} $\ourw^*$ for the minization problem. 
Specifically, given a collection of objects $\Xi$ with labels from $k$ classes, we first compute their persistence diagrams via appropriate filtrations, and obtain a resulting set of persistence diagrams $\{A_1, \ldots, A_n\}$. 
We then aim to find the best parameters for the \weightfunc{} $\omega^*$ to minimize $Tr(HLH^T) = \sum_{t=1}^k h_t L h_t^T$ subject to $HGH^T = I$ (via Corollary \ref{cor:matrixoptimization}). For example, assume that the \weightfunc{} $\ourw$ is from the class $\mathcal{F}$ of mixture of $m$ number of 2D non-negatively weighted (spherical) Gaussians. 
Each \weightfunc{} $\ourw: \reals^2 \to \reals \in \mathcal{F}$ is thus determined by $4m$ parameters $\{x_r, y_r, \sigma_r, w_r \mid r \in \{1,2,\cdots, m\}\}$ with $\ourw(z) = w_r e^{-\frac{(z_x - x_r)^2 + (z_y - y_r)^2}{\sigma_r^2}}$. 
We then use (stochastic) gradient decent to find the best parameters to minimize $Tr(HLH^T)$ subject to $HGH^T = I$. 
Note that the set of persistence diagrams / images will be fixed through the optimization process. 

From the proof of Theorem \ref{thm:matrixview} (in Appendix 1.3), it turns out that condition $HGH^T = \mathbf{I}$ is satisfied as long as the multiplicative weight $w_r$ of each Gaussian in the mixture is non-negative. Hence during the gradient descent, we only need to make sure that this holds \footnote{
In our implementation, we add a penalty term $\sum_{r=1}^m \frac{c}{exp({w_r})}$ to total-cost $k - Tr(HLH^T)$, to achieve this in a ``soft'' manner.}. 
It is easy to write out the gradient of $\TC(\ourw)$ w.r.t. each parameter $\{x_r, y_r, \sigma_r, w_r \mid r \in \{1,2,\cdots, m\}\}$ {\bf in matrix form}. For example, 
$\frac{\partial \TC(\ourw)}{\partial x_r} = -(\sum_{t=1}^k \frac{\partial h_t}{\partial x_r} L h_t^T + h_t \frac{\partial L}{\partial x_r} h_t^T + h_t L \frac{\partial h_t^T}{\partial x_r});$
where $h_t = \big[ h_{t1}, h_{t2}, ..., h_{tn} \big]$ is the $t$-th row vector of $H$. 
While this does not improve the asymptotic complexity of computing the gradient (compared to using the formulation of cost function in Definition \ref{def:optproblem}), these matrix operations can be implemented much more efficiently than using loops in languages such as Python and Matlab.
For large data sets, we use stochastic gradient decent, by sampling a subset of $s << n$ number of input persistence images, and compute the matrices $H, D, L, G$ as well as the cost using the subsampled data points. The time complexity of one iteration in updating parameters is $O(s^2 N)$, where $N$ is the size of a persistence image (recall, each persistence image is a vector in $\mathbb{R}^N$). 
In our implementation, we use  Armijo-Goldstein line search scheme to update the parameters in each (stochastic) gradient decent step. The optimization procedure terminates when the cost function converges or the number of iterations exceeds a threshold. 
Overall, the time complexity of our optimization procedure is $O(\myitnum s^2 N)$ where $\myitnum$ is the number of iterations, $s$ is the minibatch size, and $N$ is the size ($\#$ pixels) of a single persistence image. 

\section{Experiments}
\label{sec:exp}

We show the effectiveness of our metric-learning framework and the use of the learned metric via graph classification applications. 
In particular, given a set of graphs $\Xi = \{ G_1, \ldots, G_n\}$ coming from $k$ classes, we first compute the unweighted persistence images $A_i$ for each graph $G_i$, and apply the framework from Section \ref{sec:newkernel} to learn the ``best'' \weightfunc{} $\ourw^*: \reals^2 \to \reals$ \fullversiononly{on the birth-death plane $\reals^2$} using these persistence images $\{A_1, \ldots, A_n\}$ and their labels. 
We then perform graph classification using kernel-SVM with the learned $\ourw^*$-\PIWK{} kernel. 
We refer to this framework as \emph{\PIWK{}-classification} framework.  
We show two sets of experiments. 
Section \ref{subsec:neurons} shows that our learned \PIWK{} kernel significantly outperforms existing persistence-based representations. 
In Section \ref{subsec:graphs}, we compare the performance of \PIWK{}-classification framework with various state-of-the-art methods for the graph classification task over a range of data sets. 
More details / results can be found in Appendix Section 2. 

\paragraph{Setup for our \PIWK{}-based framework.}
In all our experiments, we assume that the \weightfunc{} comes from the class $\mathcal{F}$ of mixture of $m$ 2D non-negatively weighted Gaussians as described in the end of Section \ref{subsec:optimization}. 
We take $m$ and the width $\sigma$ in our \PIWK{} kernel as hyperparameters: Specifically, we search among $m\in \{3, 4, 5, 6, 7, 8 \}$ and $\sigma \in \{0.001, 0.01, 0.1, 1, 10, 100 \}$. The $10*10$-fold nested cross validation are applied to evaluate our algorithm: There are 10 folds in outer loop for evaluation of the model with selected hyperparameters and 10 folds in inner loop for hyperparameter tuning. 
We then repeat this process 10 times (although the results are extremely close whether repeating 10 times or not). 
Our optimization procedure terminates when the change of the cost function 
remains $\le 10^{-4}$ or the iteration
number exceeds 2000.  

One important question is to initialize the centers of the Gaussians in our mixture. There are three strategies that we consider. (1) We simply sample $m$ centers in the domain of persistence images randomly. (2) We collect all points in the persistence diagrams $\{A_1, \ldots, A_n\}$ derived from the training data $\Xi$, and perform a k-means algorithm to identify $m$ means. (3) We perform a k-center algorithm to those points to identify $m$ centers. Strategies (2) and (3) usually outperform strategy (1). Thus in what follows we only report results from using k-means and k-centers as initialization, referred to as \emph{\PIWK-kM} and \emph{\PIWK-kC}, respectively. Our code is published in https://github.com/topology474/WKPI.

\begin{table}[tbp]
\centering
{
\scriptsize
\caption{Classification accuracy on neuron dataset. Our results are \PIWK-km and \PIWK-kc. \label{tbl:neurons}}
\begin{tabular}{|c|ccc|cc|cc|}
\hline
Datasets & \multicolumn{3}{|c|}{Existing approaches}& \multicolumn{2}{|c|}{Alternative metric learning} & \multicolumn{2}{c|}{Our \PIWK{} framework}\\
\hline
\quad & PWGK & SW  & PI-PL & \altPIWK & \trainPWGK & \PIWK-km & \PIWK-kc\\
\hline
NEURON-BINARY & 80.5$\pm$0.4 & 85.3$\pm$0.7  & 83.7$\pm$0.3 & 82.1$\pm$2.1 & 84.6$\pm$2.4 & \textcolor{red}{\textbf{89.6 $\pm$2.2}} & 86.4$\pm$2.4\\
NEURON-MULTI & 45.1$\pm$0.3 & 57.6$\pm$0.6 & 44.2$\pm$0.3 & 54.3$\pm$2.3 & 49.7$\pm$2.4 & 56.6$\pm$2.7 & \textcolor{red}{\textbf{59.3$\pm$2.3}}\\
\hline
Average & 62.80 & 71.45 & 63.95 & 68.20 & 67.15 & \textcolor{red}{\textbf{73.10}} & 72.85\\
\hline
\end{tabular}
}
\end{table}
\subsection{Comparison with other persistence-based methods}
\label{subsec:neurons}

We compare our methods with state-of-the-art persistence-based representations, including the Persistence Weighted Gaussian Kernel (PWGK) \cite{Kusano2017Kernel}, original Persistence Image (PI) \cite{Adams2017Persistence}, and Sliced Wasserstein (SW) Kernel \cite{Carri2017Sliced}. Furthermore, as mentioned in \underline{\it Remark 2} after Definition \ref{def:newkernel}, we can learn weight functions in PWGK by the optimizing the same cost function (via replacing our \PIWK-distance with the one computed from PWGK kernel); and we refer to this as \trainPWGK. 
We can also use an alternative kernel for persistence images as described in \underline{\it Remark 2}, and then optimize the same cost function using distance computed from this kernel; we refer to this as \altPIWK{}. 
We will compare our methods both with existing approaches, as well as with these two alternative metric-learning approaches (\trainPWGK{} and \altPIWK{}). 

\paragraph{Generation of persistence diagrams.}
Neuron cells have natural tree morphology, 
rooted at the cell body (soma), 
\fullversiononly{with dendrite and axon branching out,} and  
are commonly modeled as geometric trees. See Figure 1 in the Appendix for an example. 
Given a neuron tree $T$, following \cite{li2017metrics}, we use the descriptor function $f: T \to \reals$ where $f(x)$ is the geodesic distance from $x$ to the root of $T$ along the tree. 
To differentiate the dendrite and axon part of a neuron cell, we further negate the function value if a point $x$ is in the dendrite. We then use the union of persistence diagrams $A_T$ induced by both the sublevel-set and superlevel-set filtrations w.r.t. $f$. 
\fullversiononly{Under these filtrations, intuitively, each point $(b, d)$ in the birth-death plane $\reals^2$ corresponds to the creation and death of certain branch feature for the input neuron tree.}
The set of persistence diagrams obtained this way (one for each neuron tree) is the input to our \PIWK{}-classification framework. 

\paragraph{Results on neuron datasets.}
{\bf Neuron-Binary} dataset consists of 1126 neuron trees from two classes; while {\bf Neuron-Multi} contains 459 neurons from four classes. As the number of trees is not large, we use all training data to compute the gradients in the optimization process instead of mini-batch sampling. 

\fullversiononly{Persistence images are both needed for the methodology of \cite{Adams2017Persistence} and as input for our \PIWK{}-distance, and its resolution is fixed at roughly $40 \times 40$ (see Appendix 2.2 for details).  
For persistence image (PI) approach of \cite{Adams2017Persistence}, we experimented both with the unweighted persistence images (PI-CONST), and one, denoted by (PI-PL), where the weight function $\awfunc: \reals^2 \to \reals$ is a simple piecewise-linear (PL) function adapted from what's proposed in \cite{Adams2017Persistence}; see Appendix 2.2 for details. 
Since PI-PL performs better than PI-CONST on both datasets, Table \ref{tbl:neurons} only shows the results of PI-PL. }
The classification accuracy of various methods is given in Table \ref{tbl:neurons}. 
Our results are consistently better than other topology-based approaches, as well as alternative metric-learning approaches; not only for the neuron datasets as in Table \ref{tbl:neurons}, but also for graph benchmark datasets shown in Table 3 of Appendix Section 2.2, and often by a large margin. 
In Appendix Section 2.1, we also show the heatmaps indicating the learned weight function $\omega: \mathbb{R}^2 \to \mathbb{R}$.

\subsection{Graph classification task}
\label{subsec:graphs} 
\begin{table}[tbp]
\centering
{\scriptsize
\caption{Graph classification accuracy + standard deviation. Our results are last two columns. 
\label{tbl:variance}}
\begin{tabular}{|c|ccccccc|cc|}
\hline
Dataset & \multicolumn{7}{|c|}{Previous approaches} & \multicolumn{2}{|c|}{Our approaches}\\
\hline
 & RetGK & WL  & DGK & P-WL-UC & PF & PSCN & GIN & WKPI-kM & WKPI-kC\\
\hline
{NCI1} & 84.5$\pm$0.2 & 85.4$\pm$0.3 & 80.3$\pm$0.5 & 85.6$\pm$0.3& 81.7$\pm$0.8 & 76.3$\pm$1.7 & 82.7$\pm$1.6 & {\textcolor{red}{\textbf 87.5$\pm$0.5}} & 84.5$\pm$0.5\\
NCI109 &- & 84.5$\pm$0.2 & 80.3$\pm$0.3 & 85.1$\pm$0.3& 78.5$\pm$0.5 & - & - & 85.9$\pm$0.4 & {\textcolor{red}{\textbf 87.4$\pm$0.3}} \\
PTC & 62.5$\pm$1.6 & 55.4$\pm$1.5 & 60.1$\pm$2.5 & 63.5$\pm$1.6& 62.4$\pm$1.8 & 62.3$\pm$5.7 & 66.6$\pm$6.9 & 62.7$\pm$2.7 & {\textcolor{red}{\textbf 68.1$\pm$2.4}} \\
PROTEIN & 75.8$\pm$0.6 & 71.2$\pm$0.8 & 75.7$\pm$0.5 & 75.9$\pm$0.8& 75.2$\pm$2.1 & 75.0$\pm$2.5 & 76.2$\pm$2.6 & {\textcolor{red}{\textbf 78.5$\pm$0.4 }} & 75.2$\pm$0.4 \\
DD & 81.6$\pm$0.3 & 78.6$\pm$0.4 & - & 78.5$\pm$0.4&79.4$\pm$0.8 & 76.2$\pm$2.6 & - & {\textcolor{red}{\textbf 82.0$\pm$0.5 }} & 80.3$\pm$0.4 \\
MUTAG & {\textcolor{red}{\textbf 90.3$\pm$1.1 }} & 84.4$\pm$1.5 & 87.4$\pm$2.7 & 85.2$\pm$0.3 & 85.6$\pm$1.7 & 89.0$\pm$4.4 & 90.0$\pm$8.8 & 85.8$\pm$2.5 & 88.3$\pm$2.6\\
{\tiny IMDB-BINARY} & 71.9$\pm$1.0 & 70.8$\pm$0.5 & 67.0$\pm$0.6 & 73.0$\pm$1.0 & 71.2$\pm$1.0 & 71.0$\pm$2.3 & 75.1$\pm$5.1& 70.7$\pm$1.1 & {\textcolor{red}{\textbf 75.1$\pm$1.1 }} \\
{\tiny IMDB-MULTI} & 47.7$\pm$0.3 & 49.8$\pm$0.5 & 44.6$\pm$0.4 & - & 48.6$\pm$0.7 & 45.2$\pm$2.8 & {\textcolor{red}{\textbf 52.3 $\pm$2.8 }} &46.4$\pm$0.5 & 49.5$\pm$0.4 \\ 
{\tiny REDDIT-5K} & 56.1$\pm$0.5 & 51.2$\pm$0.3 & 41.3$\pm$0.2 & - & 56.2$\pm$1.1 & 49.1$\pm$0.7 & 57.5$\pm$1.5 & 59.1$\pm$0.5 & {\textcolor{red}{\textbf 59.5$\pm$0.6 }}\\
{\tiny {REDDIT-12K}} & {\textcolor{red}{\textbf 48.7$\pm$0.2 }} & 32.6$\pm$0.3 & 32.2$\pm$0.1 & - & 47.6$\pm$0.5 & 41.3$\pm$0.4 & - & 47.4$\pm$0.6 & 48.4$\pm$0.5\\
\hline
\end{tabular}

}
\end{table}

We use a range of benchmark datasets: 
(1) several datasets on graphs derived from small chemical compounds or protein molecules: \textbf{NCI1} and \textbf{NCI109} \cite{shervashidze2011weisfeiler}, \textbf{PTC} \cite{helma2001predictive}, \textbf{PROTEIN} \cite{borgwardt2005protein}, \textbf{DD} \cite{dobson2003distinguishing} and \textbf{MUTAG} \cite{debnath1991structure}; (2) two datasets on graphs representing the response relations between users in Reddit:
\textbf{REDDIT-5K} (5 classes) and \textbf{REDDIT-12K} (11 classes) \cite{Yanardag2015Deep}; 
and (3) two datasets on IMDB networks of actors/actresses: \textbf{IMDB-BINARY} (2 classes), and \textbf{IMDB-MULTI} (3 classes). 
See Appendix Section 2.2 for descriptions of these datasets, and their statistics (sizes of graphs etc). 

Many graph classification methods have been proposed in the literature, with different methods performing better on different datasets. Thus we include multiple approaches to compare with, to include state-of-the-art results on different datasets: 
six graph-kernel based approaches: RetGK\cite{zhang2018retgk}, FGSD\cite{Verma2017}, Weisfeiler-Lehman kernel (WL)\cite{shervashidze2011weisfeiler}, Weisfeiler-Lehman optimal assignment kernel (WL-OA)\cite{Kriege2016On}, Deep Graphlet kernel (DGK)\cite{Yanardag2015Deep}, and the very recent persistent Weisfeiler-Lehman kernel (P-WL-UC) \cite{riech2019persistent}, Persistence Fisher kernel (PF)\cite{Le2018Persistence}; two graph neural networks: PATCHYSAN (PSCN) \cite{Niepert2016Learning}, Graph Isomorphism Network (GIN)\cite{Xu2018b}; and the topology-signature-based neural networks
\cite{Hofer2017Deep}. 

\paragraph{Classification results.}
To generate persistence summaries, we want to put a meaningful descriptor function on input graphs. 
We consider two choices: (a) the {\it Ricci-curvature function} $f_c: G \to \reals$, where $f_c(x)$ is the discrete Ricci curvature for graphs as introduced in \cite{lin2011ricci}; and (b) {\it Jaccard-index function} $f_J: G \to \reals$ which measures edge similarities in a graph. See Appendix 2.2 for details. 
Graph classification results are in Table \ref{tbl:variance}: here Ricci curvature function is used for the small chemical compounds datasets (NCI1, NCI9, PTC and MUTAG), while Jaccard function is used for proteins datasets (PROTEIN and DD) and the social/IMDB networks (IMDB's and REDDIT's).
Results of previous methods are taken from their respective papers -- only a subset of the aforementioned previous methods are included in Table \ref{tbl:variance} due to space limitation. Comparisons with {\bf more methods} are in Appendix Section 2.2. We rerun the two best performing approahes GIN and RetGK using the exactly same nested cross validation setup as ours. The results are also in Appendix Section 2.2, which is similar to those in Table \ref{tbl:variance}. 

Except for \textbf{MUTAG} and {\bf IMDB-MULTI},
the performances of our \PIWK-framework are similar or better than \textbf{the best of other methods}. 
Our \PIWK-framework performs well on both chemical graphs and social graphs, while some of the earlier work tend to work well on one type of the graphs. 
Furthermore, note that the chemical / molecular graphs usually have attributes associated with them. Some existing methods use these attributes in their classification \cite{Yanardag2015Deep, Niepert2016Learning, zhang2018retgk}. Our results however are obtained \textbf{purely based on graph structure} without using any attributes. 
In terms of variance, the standard deviations of our methods tend to be on-par with graph kernel based previous approaches; and are usually much better (smaller) than the GNN based approaches (i.e, PSCN and GIN).

\section{Concluding remarks}
\label{appendix:sec:conclusion}

This paper introduces a new weighted-kernel for persistence images (\PIWK), together with a metric-learning framework to learn the best \weightfunc{} for \PIWK{}-kernel from labelled data. 
Very importantly, we apply the learned \PIWK{}-kernel to the task of graph classification, and show that our new framework achieves similar or better results than the best results among a range of previous graph classification approaches. 

In our current framework, only a single descriptor function of each input object (e.g, a graph) is used to derive a persistence-based representation. It will be interesting to extend our framework to leverage multiple descriptor functions (so as to capture different types of information) simultaneously and effectively. Recent work on multidimensional persistence would be useful in this effort. 
Another interesting question is to study how to incorporate categorical attributes associated to graph nodes (or points in input objects) effectively. 
\fullversiononly{Indeed, real-valued attributed can potentially be used as a descriptor function to generate persistence-based summaries. But the handling of categorical attributes via topological summarization is much more challenging, especially when there is no (prior-known) correlation between these attributes (e.g, the attribute is simply a number from $\{1,2,\cdots, s\}$, coming from $s$ categories. The indices of these categories may carry no meaning).} 

\subsubsection*{Acknowledgments}
The authors would like to thank Chao Chen and Justin Eldridge for useful discussions related to this project. We would also like to thank Giorgio Ascoli for helping provide the neuron dataset.
This work is partially supported by National Science Foundation via grants CCF-1740761, CCF-1733798, and RI-1815697, as well as by National Health Institute under grant R01EB022899. 

{\small
	\bibliographystyle{abbrv}
\bibliography{reference}
}

\newpage

\appendix
\section{Missing details from Section 3}
\label{appendix:sec:newkernel}

\subsection{Proof of Lemma 3.2}
\label{appendix:thm:psd}

Consider an arbitrary collection of $n$ persistence images $\{ \aI_1, \ldots, \aI_n \}$ (i.e, a collection of $n$ vectors in $\reals^N$). 
Set $K = [k_{ij}]_{n\times n}$ to be the $n\times n$ kernel matrix where $k_{ij} = \kw(\aI_i, \aI_j)$. 
Now given any vector $v = (v_{1}, v_{2}, ..., v_{n})^T$, we have that: 
\begin{align*}
v^T K v &= \sum_{i,j = 1}^n v_{i}v_{j} k_{ij} \\
&= \sum_{i,j=1}^n v_{i}v_{j} \sum_{s = 1}^m \ourw(p_s)e^{-\frac{(\aI_i(s) - \aI_j(s))^2}{2 {\sigma}^2}}\\
&= \sum_{s=1}^m \ourw(p_s) \sum_{i,j=1}^n v_{i} v_{j} e^{-\frac{(\aI_i(s) - \aI_j(s))^2}{2 {\sigma}^2}}. 
\end{align*}
Because Gaussian kernel is positive semi-definite and the \weightfunc{} $\ourw$ is non-negative, $v^T K v \geq 0$ for any $v \in \reals^N$. Hence the \PIWK{} kernel is positive semi-definite. 

\subsection{Proof of Theorem 3.4}
\label{appendix:thm:stability}

By Definitions \textbf{3.1} and \textbf{3.3}, combined with the fact that $1 - e^{-x} \le x$ for any $x\in \reals$, we have that: 
\begin{align*}
\ourD^2(A , B) &= \kw(\aI_A, \aI_A) + \kw(\aI_B, \aI_B) - 2\kw(\aI_A, \aI_B)\\
&= ~2 \sum_{s = 1}^N \ourw(p_s) - 2 \sum_{s=1}^n \ourw(p_s) e^{-\frac{(\aI_A(s) - \aI_B(s))^2}{{\sigma}^2}}\\
&= ~2 \sum_{s = 1}^N \ourw(p_s)(1 - e^{-\frac{(\aI_A(s) - \aI_B(s))^2}{{\sigma}^2}}) \\
& ~\leq 2 c_w \sum_{s = 1}^N (1 - e^{-\frac{(\aI_A(s) - \aI_B(s))^2}{{\sigma}^2}})\\
& ~\leq 2 \frac{c_w}{{\sigma} ^ 2} \sum_{s = 1}^n (\aI_A(s) - \aI_B(s))^2 \\
& ~\leq 2 \frac{c_w}{{\sigma} ^ 2} ||\aI_A - \aI_B||_2^2
\end{align*}

Furthermore, by Theorem 10 of \cite{Adams2017Persistence}, when the distribution $\phi_u$ to in Definition 2.1 is the normalized Gaussian $\phi_u(z) = \frac{1}{2\pi \tau^2} e^{-\frac{\|z - u\|^2}{2\tau^2}}$, and the weight function $\awfunc = 1$, we have that 
$\| \aI_A - \aI_B \|_2 \le \sqrt{\frac{10}{\pi}} \cdot \frac{1}{\tau} \cdot d_{W, 1} (A, B)$. (Intuitively, view two persistence diagrams $A$ and $B$ as two (appropriate) measures, and $d_{W,1}(A, B)$ is then the ``earth-mover'' distance between them so as to convert the measure corresponding to $A$ to that for $B$, where the cost is measured by the total $L_1$-distance that all mass have to travel.)
Combining this with the inequalities for $\ourD^2(A,B)$ above, the theorem then follows. 

\subsection{Proof of Theorem \textbf{3.6}}
\label{appendix:thm:matrixview}

We first show the following properties of matrix $L$ which will be useful for the proof later. 

\begin{lemma}\label{lem:Lproperties}
The matrix L is symmetric and positive semi-definite. 
Furthermore, for every vector $f \in \reals^n$, we have 
\begin{equation}\label{eqn:Lpsd}
f^T L f = \frac{1}{2} \sum_{i,j = 1}^n \Lambda_{ij} (f_i - f_j)^2
\end{equation}
\end{lemma}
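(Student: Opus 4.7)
The plan is to verify the three claims in the standard textbook order for graph Laplacians, adapted to the fact that our ``edge weights'' are $\Lambda_{ij} = \ourD^2(A_i, A_j)$ rather than arbitrary numbers. First I would observe symmetry: since $\ourD$ is a pseudo-metric on persistence diagrams (as noted right after Definition 3.3), the matrix $\Lambda$ is symmetric, and $G$ is diagonal by construction, so $L = G - \Lambda$ is symmetric.

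Next I would establish the quadratic-form identity \eqref{eqn:Lpsd} by a direct expansion. Writing $f^T L f = f^T G f - f^T \Lambda f$, the first term is $\sum_i g_{ii} f_i^2 = \sum_i \bigl(\sum_j \Lambda_{ij}\bigr) f_i^2$, which I would symmetrize as $\tfrac{1}{2} \sum_{i,j} \Lambda_{ij}(f_i^2 + f_j^2)$ using the symmetry of $\Lambda$. Subtracting $\sum_{i,j} \Lambda_{ij} f_i f_j$ then yields $\tfrac{1}{2}\sum_{i,j}\Lambda_{ij}(f_i-f_j)^2$, which is exactly the desired identity.

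Finally, positive semi-definiteness follows immediately from \eqref{eqn:Lpsd} together with the non-negativity of every entry $\Lambda_{ij} = \ourD^2(A_i,A_j) \ge 0$: each summand $\Lambda_{ij}(f_i - f_j)^2$ is non-negative, hence $f^T L f \ge 0$ for all $f \in \reals^n$.

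There is no real obstacle here; the only minor subtlety is making sure we are entitled to use both symmetry of $\Lambda$ and non-negativity of its entries. Both hold because $\ourD$ is the pseudo-metric induced by the positive semi-definite kernel $\kw$ (Lemma 3.2), so $\ourD(A_i,A_j) = \ourD(A_j,A_i) \ge 0$, and thus $\Lambda_{ij} = \ourD^2(A_i,A_j)$ inherits both properties. With these two facts in hand, the rest is the standard Laplacian calculation sketched above.
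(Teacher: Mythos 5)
Your proposal is correct and follows essentially the same route as the paper: symmetry of $L$ from the symmetry of $\Lambda$ and $G$, the identity \eqref{eqn:Lpsd} by expanding $f^T G f - f^T \Lambda f$ and symmetrizing, and positive semi-definiteness read off from the identity using $\Lambda_{ij} \ge 0$. Your explicit justification that $\Lambda$ is symmetric and entrywise non-negative (because $\ourD$ is a pseudo-metric) is a small additional care the paper leaves implicit.
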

\begin{proof} 
By construction, it is easy to see that $L$ is symmetric as matrices $\Lambda$ and $G$ are. 
The positive semi-definiteness follows from Eqn (\ref{eqn:Lpsd}) which we prove now. 
\begin{equation*}
\begin{aligned}
f^T L f &= f^T G f - f^T \Lambda f ~~= \sum_{i=1}^n f_i^2 g_{ii} - \sum_{i, j = 1}^n f_i f_j \Lambda_{ij} \\
&= \frac{1}{2}\big(\sum_{i=1}^n f_i^2 g_{ii} +\sum_{j=1}^n f_j^2 g_{jj} - \sum_{i, j = 1}^n 2 f_i f_j \Lambda_{ij} \big) \\
&= \frac{1}{2}\big(\sum_{i=1}^n f_i^2 \sum_{j=1}^n \Lambda_{ij} + \sum_{j=1}^n f_j^2 \sum_{i=1}^n \Lambda_{ji}\\
&~~~~~~~~~~~~- \sum_{i, j = 1}^n 2 f_i f_j \Lambda_{ij} \big) \\
&= \frac{1}{2} \sum_{i,j = 1}^n \Lambda_{ij} \cdot (f_i^2 + f_j^2 - 2 f_i f_j) \\
&= \frac{1}{2} \sum_{i,j=1}^n \Lambda_{ij}(f_i - f_j)^2
\end{aligned}
\end{equation*}
The lemma then follows. 
\end{proof}

We now prove the statement in Theorem 3.6. 
Recall that the definition of various matrices, and that $h_t$'s are the row vectors of matrix $H$. For simplicity, in the derivations below, we use $D(i,j)$ to denote the $\ourw$-induced \PIWK{}-distance $\ourD(A_i, A_j)$ between persistence diagrams $A_i$ and $A_j$. 
Applying Lemma \ref{lem:Lproperties}, we have: 
\begin{equation}\label{eqn:traceone}
\begin{aligned}
\trace&(HLH^T) = \sum_{t = 1}^k (HLH^T)_{tt} 
~~= \sum_{t=1}^k h_t L h_t^T \\
&= \sum_{t=1}^k \frac{1}{2} \cdot \sum_{j_1, j_2 = 1}^n D^2(j_1, j_2)(h_{t, j_1} - h_{t, j_2})^2 \\
& = \sum_{t=1}^k \frac{1}{2} \cdot \sum_{j_1, j_2 = 1}^n D^2(j_1, j_2) (h_{t, j_1}^2 + h_{t, j_2}^2 - 2 h_{t, j_1} h_{t, j_2}). 
\end{aligned}
\end{equation}
Now by definition of $h_{ti}$, it is non-zero only when $i\in \aC_t$. Combined with Eqn (\ref{eqn:traceone}), it then follows that: 
\begin{equation*}
    \begin{aligned}
\trace&(HLH^T) = \sum_{t=1}^k \frac{1}{2} \cdot \big(\sum_{j_1\in \aC_t, j_2 \in [1,n]} \frac{D^2(j_1, j_2)}{\acost_\ourw(t, \cdot) } \\
&+ \sum_{j_1\in [1,n], j_2 \in \aC_t} \frac{D^2(j_1, j_2)}{\acost_\ourw(t, \cdot) } - 2 \sum_{j_1, j_2 \in \aC_t} \frac{D^2(j_1, j_2)}{\acost_\ourw(t, \cdot) } \big) \\
&= \sum_{t=1}^k \frac{1}{2} \big(\sum_{j_1 \in \aC_t, j_2 \notin \aC_t}\frac{D^2(j_1, j_2)}{\acost_\ourw(t, \cdot) } \\
&+ \sum_{j_1 \notin \aC_t, j_2 \in \aC_t} \frac{D^2(j_1, j_2)}{\acost_\ourw(t, \cdot) }\big)\\
&= \sum_{t=1}^k \sum_{j_1 \in A_t, j_2 \notin A_t} \frac{D^2(j_1, j_2)}{\acost_\ourw(t, \cdot) } \\
& = \sum_{t=1}^k \frac{\acost_\ourw(t, \cdot)  - \acost_\ourw(t, t) }{\acost_\ourw(t, \cdot) } \\
&= k - \TC(\ourw)
\end{aligned}
\end{equation*}
This proves the first statement in Theorem 3.6. 
We now show that the matrix $HGH^T$ is the $k\times k$ identity matrix $\mathbf{I}$. 
Specifically, first consider $s \neq t \in [1, k]$; we claim:  
\begin{equation*}
\begin{aligned}
(H G H^T)_{st} &= h_s G h_t^T = \sum_{j_1, j_2 = 1}^n h_{sj_1} G_{j_1 j_2} h_{tj_2} \quad =0. 
\end{aligned}
\end{equation*}
It equals to $0$ because $h_{sj_1}$ is non-zero only for $j_1 \in \aC_s$, while $h_{tj_2}$ is non-zero only for $j_2 \in \aC_t$. However, for such a pair of $j_1$ and $j_2$, obviously $j_1 \neq j_2$, which means that $G_{j_1j_2} = 0$. Hence the sum is $0$ for all possible $j_1$ and $j_2$'s. 

Now for the diagonal entries of the matrix $HGH^T$, we have that for any $t\in [1, k]$:  
\begin{equation*}
\begin{aligned}
(H G H^T)_{tt} &= h_t G h_t^T = \sum_{j_1, j_2 = 1}^n h_{t j_1} G_{j_1, j_2} h_{tj_2} \\
& = \sum_{j_1, j_2 \in \aC_t} \frac{G_{j_1 j_2}}{\acost_\ourw(t, \cdot) } \quad = \sum_{j_1 \in \aC_t} \frac{G_{j_1 j_1}}{\acost_\ourw(t, \cdot) }\\
& = \sum_{j_1 \in \aC_t} \frac{\sum_{\ell =1}^n D^2(j_1, \ell)}{\acost_\ourw(t, \cdot) } \\
& = \frac{\sum_{j_1 \in \aC_t, \ell \in [1, n]} D^2(j_1, \ell)}{\acost_\ourw(t, \cdot) } \\
&= \frac{\acost_\ourw(t, \cdot) }{\acost_\ourw(t, \cdot) } \quad = 1. 
\end{aligned}
\end{equation*}

This finishes the proof that $HGH^T = \mathbf{I}$, and completes the proof of Theorem 3.6.

\section{More details for Experiments}
\label{appendix:sec:exp}

\subsection{More on neuron experiments}
\label{appendix:subsec:neurons}

\paragraph{Description of neuron datasets. }
Neuron cells have natural tree morphology (see Figure \ref{fig:neuron} (a) for an example), 
rooted at the cell body (soma), with dentrite and axon branching out. Furthermore, this tree morphology is important in understanding neurons. Hence it is common in the field of neuronscience to model a neuron as a (geometric) tree (see Figure \ref{fig:neuron} (b) for an example downloaded from NeuroMorpho.Org\cite{ascoli2007neuromorpho}). 

Our NEURON-BINARY dataset consists of 1126 neuron trees classified into two (primary) classes: \emph{interneuron} and \emph{principal neurons} (data partly from the Blue Brain Project \cite{markram2015reconstruction} and downloaded from http://neuromorpho.org/). The second NEURON-MULTI dataset is a refinement of the 459 interneuron class into four (secondary) classes: \emph{basket-large, basket-nest, neuglia} and \emph{martino}. 
\begin{figure}[tbhp]
\begin{center}
    \begin{tabular}{cc}
    \includegraphics[height = 2.5cm]{./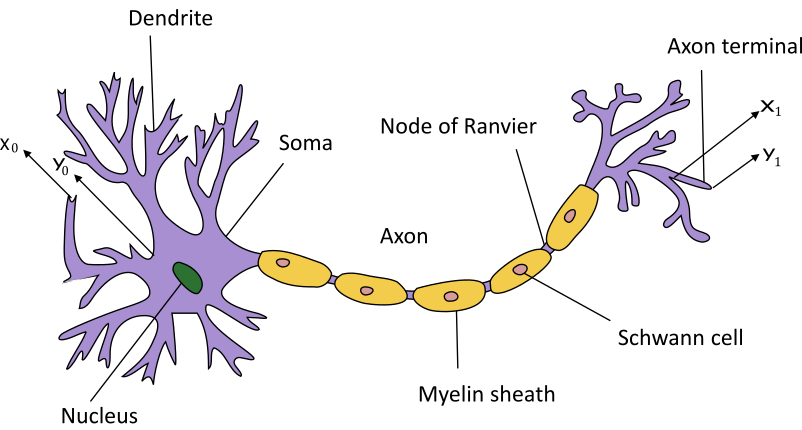} & 
    \includegraphics[height = 3cm, width = 3cm]{./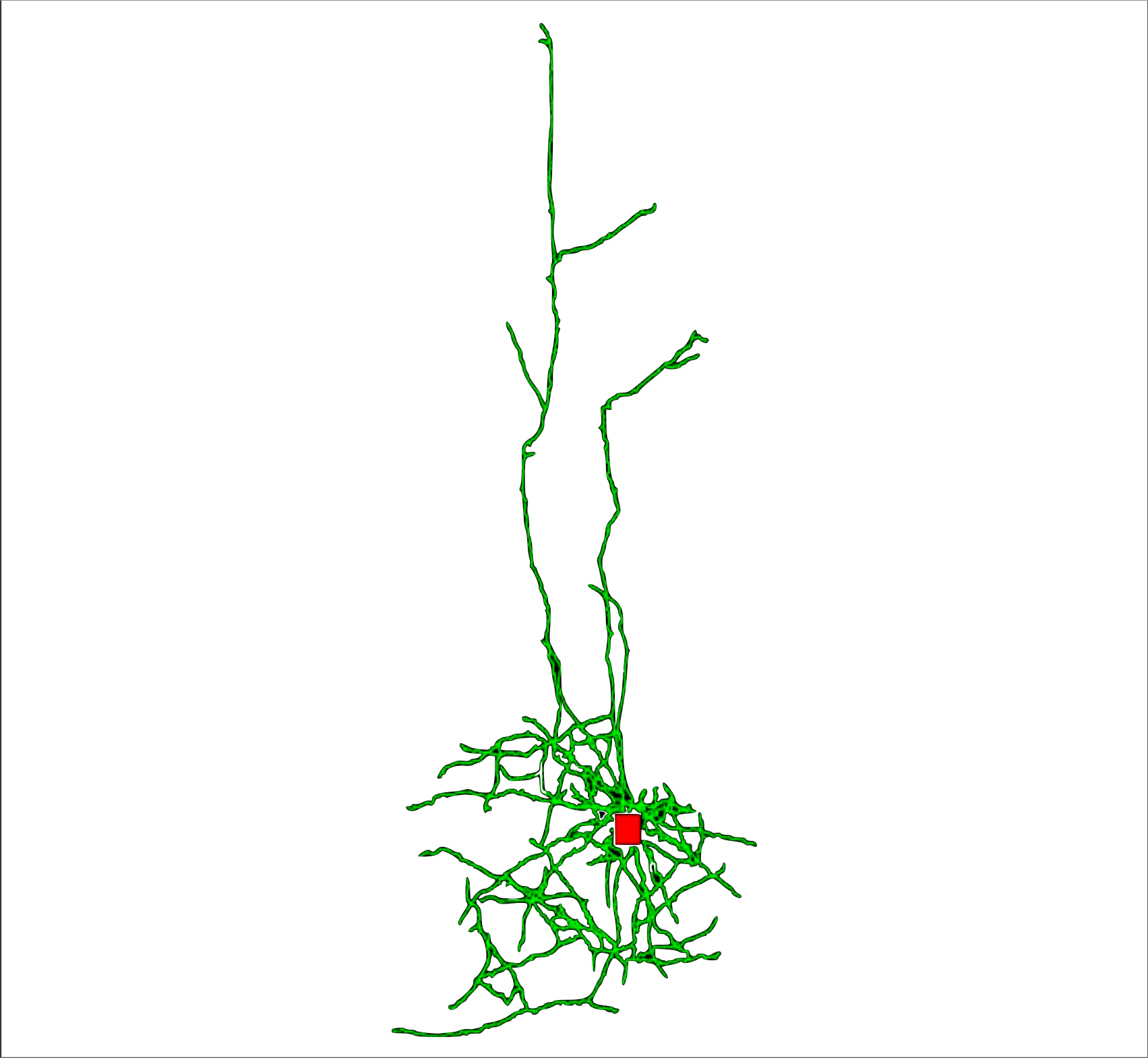}\\
    (a) &  (b) 
    \end{tabular}
\end{center}
\vspace*{-0.15in}
\caption{(a) An neuron cell (downloaded from Wikipedia)and (b) an example of a neuron tree (downloaded from NeuroMorpho.Org). \label{fig:neuron}}
\end{figure}

\paragraph{Setup for persistence images.}
Persistence-images are both needed for the methodology of \cite{Adams2017Persistence} and as input for our \PIWK{}-distance. 
%
For each dataset, the persistence image for each object inside is computed within the rectangular bounding box of the points from all persistence diagrams of input trees. The $y$-direction is then discretized to $40$ uniform intervals, while the $x$-direction is discretized accordingly so that each pixel is a square. 
For persistence image (PI) approach of \cite{Adams2017Persistence}, we show results both for the unweighted persistence images (PI-CONST), and one, denoted by PI-PL, where the weight function $\awfunc: \reals^2 \to \reals$ (for Definition 2.1) is the following piecewise-linear function (modified from one proposed by Adams et al. \cite{Adams2017Persistence}) where $b$ the largest persistence for any persistent-point among all persistence diagrams. 
\begin{equation}
\awfunc(x,y) = 
\begin{cases}
\frac{|y-x|}{b} & |y-x| < b ~ and ~ y > 0\\
\frac{|-y-x|}{b} & |-y-x| < b ~ and ~ y < 0\\
1 & otherwise
\end{cases}
\end{equation}

\paragraph{Weight function learnt.}
In Figure \ref{fig:neuronheatmap} we show the heatmaps of the learned \weightfunc{} $\ourw^*$ for both datasets. Interestingly, we note that the important branching features (points in the birth-death plane with high $\ourw^*$ values) separating the two primary classes (i.e, for {\bf Neuron-Binary} dataset) is different from those important for classifying neurons from one of the two primary classes (the interneuron class) into the four secondary classes (i.e, the {\bf Neuron-Multi} dataset). Also high importance (weight) points may not have high persistence. 
In the future, it would be interesting to investigate whether the important branch features are also biochemically important. 

\begin{figure}[tbhp]
\begin{center}
    \begin{tabular}{cc}
    \includegraphics[height = 3cm]{./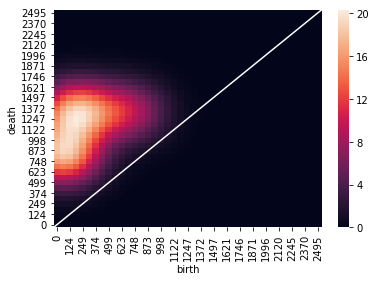}
    &  \includegraphics[height = 3cm]{./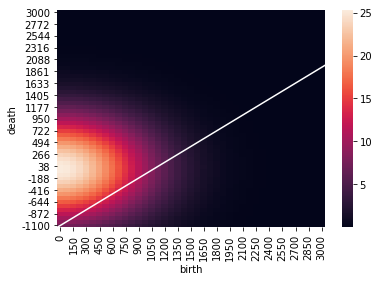}\\
    \end{tabular}    
\end{center}
\vspace*{-0.15in}
\caption{ Heatmaps of the learned \weightfunc{} $\ourw^*$ for Neuron-Binary (left) and Neuron-Multi (right) datasets. Each point in this plane indicates the birth-death of some branching feature. Warmer color (e.g, red) indicates higher $\ourw^*$ value. $x$- and $y$-axies are birth / death time measured by the descriptor function $f$ (modified geodesic function, where for points in dendrites they are negation of the distance to root).} 
\label{fig:neuronheatmap}
\end{figure}

\subsection{More on graph classification experiments}

\paragraph{Benchmark datasets for graph classification.} 
Below we first give a brief description of the benchmark datasets we used in our experiments. These are collected from the literature. 

\textbf{NCI1} and \textbf{NCI109} \cite{shervashidze2011weisfeiler} consist of two balanced subsets of datasets of chemical compounds screened for activity against non-small cell lung cancer and ovarian cancer cell lines, respectively.\\
\textbf{PTC} \cite{helma2001predictive} is a dataset of graph structures of chemical molecules from rats and mice which is designed for the predictive toxicology challenge 2000-2001.\\
\textbf{DD} \cite{dobson2003distinguishing} is a data set of 1178 protein structures. Each protein is represented by a graph, in which the nodes are amino acids and two nodes are connected by an edge if they are less
than 6 Angstroms apart. They are classified according to whether they are enzymes or not.\\
\textbf{PROTEINS} \cite{borgwardt2005protein} contains graphs of protein. In each graph, a node represents a secondary structure element (SSE) within protein structure, i.e. helices, sheets and turns. Edges connect nodes if they are neighbours along amino acid sequence or neighbours in protein structure space. Every node is connected to its three nearest spatial neighbours.\\
\textbf{MUTAG} \cite{debnath1991structure} is a dataset collecting 188 mutagenic aromatic and heteroaromatic nitro compounds labelled according to whether they have a mutagenic effect on the Gramnegtive bacterium Salmonella typhimurium.\\
\textbf{REDDIT-5K} and \textbf{REDDIT-12K} \cite{Yanardag2015Deep} consist of graph representing the discussions on the online forum Reddit. In these datasets, nodes represent users and edges between two nodes represent whether one of these two users leave comments to the other or not. In REDDIT-5K, graphs are collected from 5 sub-forums, and they are labelled by to which sub-forums they belong. In REDDIT-12K, there are 11 sub-forums involved, and the labels are similar to those in REDDIT-5K.\\
\textbf{IMDB-BINARY} and \textbf{IMDB-MULTI} \cite{Yanardag2015Deep} are dataset consists of networks of 1000 actors or actresses who played roles in movies in IMDB. In each graph, a node represents an actor or actress, and an edge connects two nodes when they appear in the same movie. In IMDB-BINARY, graphs are classified into Action and Romance genres. In IMDB-MULTI, they are collected from three different genres: Comedy, Romance and Sci-Fi. 

The statistics of these datasets are provided in Table \ref{tbl:datastats}. 
In our experiments, for REDDIT-12K dataset, due to the larger size of the dataset (with about 13K graphs), we deploy the EigenPro method (\cite{ma2017diving}, code available at https://github.com/EigenPro/EigenPro-matlab), which is a preconditioned (stochastic) gradient descent iteration) to significantly improve the efficiency of kernel-SVM. 

\begin{table*}[htbp]
\centering
\caption{Statistics of the benchmark graph datasets}
\vspace{0.1in}
\begin{tabular}{cccccccccc}
\hline
Dataset & \#classes & \#graphs & average \#nodes & average \#edges\\
\hline
NCI1 & 2 & 4110 & 29.87 & 32.30\\
NCI109 & 2 & 4127 & 29.68 & 31.96\\
PTC & 2 & 344 & 14.29 & 14.69\\
PROTEIN & 2 & 1113 & 39.06 & 72.82\\
DD & 2 & 1178 & 284.32 & 715.66\\
IMDB-BINARY & 2 & 1000 & 19.77 & 96.53\\
IMDB-MULTI & 3 & 1500 & 13.00 & 65.94\\
REDDIT-5K & 5 & 4999 & 508.82 & 594.87\\
REDDIT-12K & 11 & 12929 & 391.41 & 456.89\\
\hline
\label{tbl:datastats}
\end{tabular}
\end{table*}

\paragraph{Persistence generation.}
To generate persistence diagram summaries, we want to put a meaningful descriptor function on input graphs. 
We consider two choices in our experiments: (a) the {\it Ricci-curvature function} $f_c: G \to \reals$, where $f_c(x)$ is a discrete Ricci curvature for graphs as introduced in \cite{lin2011ricci}; and (b) {\it Jaccard-index function} $f_J: G \to \reals$. 

Then Ollivier's Ricci curvature between two nodes $u$ and $v$ is $\kappa^{\alpha}_{uv} = 1 - W(m_u^{\alpha}, m_v^{\alpha}) / d(u,v)$ where $W(\cdot, \cdot)$ is Wasserstein distance between two measures and $d(u,v)$ is the distance between two nodes, and probability measure $m_u^{\alpha}$ around node $u$ is defined as 
 \begin{equation}
     m_x^{\alpha}(x) =
          \begin{cases}
    \alpha & x = u\\
     (1 - \alpha)/n_u & x \in \mathcal{N}(u)\\
     0 & \text{otherwise}
     \end{cases}
 \end{equation}
 $n_u = |\mathcal{N}(u)|$ and $\alpha$ is a parameter within $[0, 1]$. In this paper, we set $\alpha = 0.5$.

In particular, the Jaccard-index of an edge $(u, v)\in G$ in the graph is defined as $\rho(u,v) = \frac{|NN(u) \cap NN(v)|}{|NN(u) \cup NN(v)|}$, where $NN(x)$ refers to the set of neighbors of node $x$ in $G$. The Jaccard index has been commonly used as a way to measure edge-similarity\footnote{We modify our persistence algorithm slightly to handle the edge-valued Jaccard index function}. 
As in the case for neuron data sets, we take the union of the $0$-th persistence diagrams induced by both the sublevel-set and the superlevel-set filtrations of the descriptor function $f$, and convert it to a persistence image as input to our \PIWK{}-classification framework \footnote{We expect that using the $0$-th zigzag persistence diagrams will provide better results. However, we choose to use only $0$-th standard persistence as it can be easily implemented to run in $O(n\log n)$ time using a simple union-find data structure.}. 

In all results reported in main text and in Table \ref{tbl:graphs}, Ricci curvature function is used for the small chemical compounds data sets (NCI1, NCI9, PTC and MUTAG), while Jaccard function is used for the two proteins datasets (PROTEIN and DD) as well as the social/IMDB networks (IMDB's and REDDIT's). Both 0-dim and 1-dim extented persistence diagrams are employed. In general, we observe that Ricci curvature is more sensitive to accurate graph local structure, while Jaccard function is better for noisy graphs (with noisy edge). In Figure \ref{fig: benchmarkheatmap}, we show the heatmaps of the weight function before and after our metric learning for NCI1 and REDDIT-5K datasets. In particular, the left column shows the heatmaps of the initialized weight function, while the right column shows the heatmaps of the optimal weight function as learned by our algorithm.

\begin{table*}[htbp]
\centering
{\scriptsize
\caption{Classification accuracy on graphs. Our results are in columns \PIWK{}-kM and \PIWK-kC. 
\label{tbl:graphs}}
\vspace{0.1in}
\begin{tabular}{|c|cccccccc|cc|}
\hline
Dataset & \multicolumn{8}{|c|}{Previous approaches} & \multicolumn{2}{|c|}{Our appraches}\\
\hline
  & RetGK & WL & WL-OA & DGK & FGSD & PSCN & GIN & P-WL-UC & \PIWK-kM & \PIWK-kC\\
\hline
NCI1 & 84.5 & 85.4 & 86.1 & 80.3 & 79.8 & 76.3 & 82.7 & 85.6 & \textcolor{red}{\textbf{87.5}} & 84.5\\
NCI109 &- & 84.5 & 86.3 & 80.3 & 78.8 & - & - & 85.1 & 85.9 & \textcolor{red}{\textbf{87.4}}\\
PTC & 62.5 & 55.4 & 63.6 & 60.1 & 62.8 & 62.3 & 66.6 & 63.5 & 62.7 & \textcolor{red}{\textbf{68.1}}\\
PROTEIN & 75.8 & 71.2 & 76.4 & 75.7 & 72.4 & 75.0 & 76.2 & 75.9 &\textcolor{red}{\textbf{78.5}} & 75.2\\
DD & 81.6 & 78.6 & 79.2 & - & 77.1 & 76.2 & - & 78.5 & \textcolor{red}{\textbf{82.0}}& 80.3\\
MUTAG & 90.3 & 84.4 & 84.5 & 87.4 & \textcolor{red}{\textbf{92.1}} & 89 & 90 & 85.2 & 85.8 & 88.3\\
IMDB-BINARY & 71.9 & 70.8 & - & 67.0 & 71.0 & 71.0 & 75.1 & 73.0 & 70.7 & \textcolor{red}{\textbf{75.4}}\\
IMDB-MULTI & 47.7 & 49.8 & -  & 44.6 & 45.2 & 45.2 & \textcolor{red}{\textbf{52.3}}& - & 46.4 & 49.5\\ 
REDDIT-5K & 56.1 & 51.2 & - & 41.3 & 47.8 & 49.1 & 57.5 & - & 59.1 & \textcolor{red}{\textbf{59.5}}\\
REDDIT-12K & \textcolor{red}{48.7} & 32.6 & - & 32.2 & - & 41.3 & - & - & 47.4 & 48.4 \\
\hline
Average & - & 66.39 &- & -&- &- &- & - & 69.99 & {\textcolor{red}{\textbf 71.66}}  \\
\hline
\end{tabular}
}

\end{table*}

\begin{figure}[tbhp]
    \centering
    \begin{tabular}{cc}
      ~~Initial weight function & ~~Learnt weight function \\
    \includegraphics[height = 3cm]{./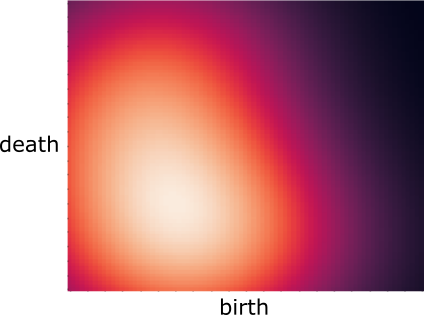} & \includegraphics[height = 3cm]{./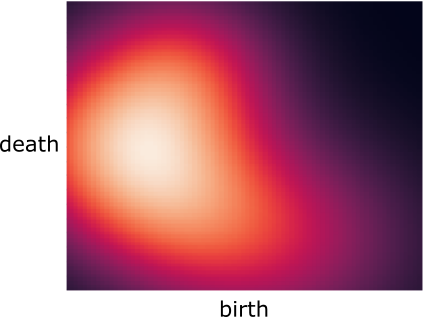} \\
       \includegraphics[height = 3cm]{./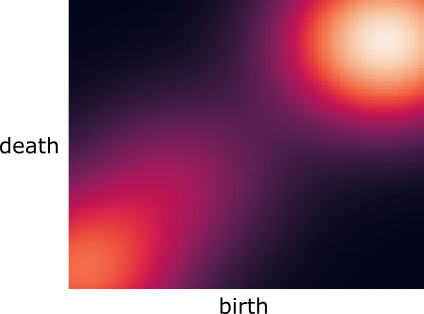} & 
        \includegraphics[height = 3cm]{./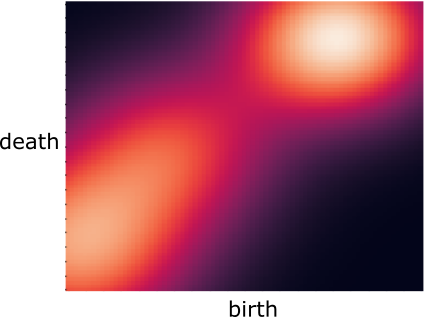}
    \end{tabular}
    \caption{Heatmap of initialized weight function (left column) and that of the learnt \weightfunc{} $\ourw^*$  (right column). Top row shows results for NCI1 data set; while bottom row contains those for REDDIT-5K data set.}
    \label{fig: benchmarkheatmap}
\end{figure}

\paragraph{Additional results.} 
Many graph classification methods have been proposed in the literature. We compare our results with a range of existing approaches, which includes state-of-the-art results on different datasets:  
six graph-kernel based approaches: RetGK\cite{zhang2018retgk}, FGSD\cite{Verma2017}, Weisfeiler-Lehman kernel (WL)\cite{shervashidze2011weisfeiler}, Weisfeiler-Lehman optimal assignment kernel (WL-OA)\cite{Kriege2016On}, Deep Graphlet kernel (DGK)\cite{Yanardag2015Deep}, and the very recent persistent Weisfeiler-Lehman kernel (P-WL-UC)\footnote{Note that results for three version of persistent WL kernels are reported in their paper. We take the one (P-WL-UC, with uniform node labels) that performs the best from their Table 1.} \cite{riech2019persistent}; two graph neural networks: PATCHYSAN (PSCN) \cite{Niepert2016Learning}, Graph Isomorphism Network (GIN)\cite{Xu2018b}; as well as the topology-signature-based neural networks
\cite{Hofer2017Deep}. 

Additional results of comparing our results with more existing methods are given in Table \ref{tbl:graphs}. 
The results of DL-TDA (topological signature based deep learning framework) \cite{Hofer2017Deep} are not listed in Table \ref{tbl:graphs}, as only the classification accuracy for REDDIT-5K (accuracy $54.5\%$) and REDDIT-12K ($44.5\%$) are given in their paper (although their paper contains many more results on other objects, such as images). While also not listed in this table, we note that our results also outperform the newly independently proposed general neural network architecture for persistence representations reported in the very recent preprint \cite{carriere2019general}.  
Comparison with other topological-based non-neural network approaches are given below.

\paragraph{Topological-based methods on graph data.} 
Here we compare our \PIWK{}-framework with the performance of several state-of-the-art persistence-based classification frameworks, including: PWGK \cite{Kusano2017Kernel}, SW \cite{Carri2017Sliced}, PI \cite{Adams2017Persistence} and PF \cite{Le2018Persistence}. We also compare it with two alternative ways to learn the metric for persistence-based representations: 
{\bf \trainPWGK{}} is the version of PWGK \cite{Kusano2017Kernel} where we learn the weight function in its formulation, using the same cost-function as what we propose in this paper for our \PIWK{} kernel functions. 
{\bf \altPIWK{}} is the alternative formulation of a kernel for persistence images where we set the kernel to be $k(\aI, \aI') = \sum_{s=1}^N e^{-\frac{\ourw(p_s) (\aI(s) - \aI'(s))^2}{2\sigma^2}}$, instead of our \PIWK{}-kernel as defined in Definition 3.1. 
\begin{table*}[htbp]
\centering
{\scriptsize 
\caption{Classification accuracy on graphs for topology-based methods. 
\label{tbl:persbenchmark}}
\vspace{0.1in}
\begin{tabular}{|c|ccccc|cc|cc|}
\hline
Datasets & \multicolumn{5}{|c|}{Existing TDA approaches} & \multicolumn{2}{|c|}{Alternative metric learning} & \multicolumn{2}{c|}{Our \PIWK{} framework}\\
\hline
 & PWGK & PI-CONST & PI-PL & SW & PF & trainPWGK & altWKPI & WKPI-kM & WKPI-kC\\
\hline
NCI1 & 73.3 & 72.5 & 72.1 & 80.1 & 81.7 & 76.5 & 77.4 & \textcolor{red}{\textbf{87.2}} & 84.7\\
NCI109 & 71.5 & 74.3 & 73.1 & 75.5 & 78.5 & 77.2 & 81.2 & 85.5 & \textcolor{red}{\textbf{86.9}}\\
PTC & 62.2 & 61.3 & 64.2 & \textcolor{red}{\textbf{64.5}} & 62.4 & 62.5 & 64.2 & 61.1 & 64.3\\
PROTEIN & 73.6 & 72.2 & 69.1 & 76.4 & 75.2 & 74.8 & 75.1 & \textcolor{red}{\textbf{77.4}} & 75.6\\
DD & 75.2 & 74.2 & 76.8 & 78.9 & 79.4 & 76.4 & 72.5 & \textcolor{red}{\textbf{79.8}} & 79.1 \\
MUTAG & 82.0 & 85.2 & 83.5 & 87.1 & 85.6 & 86.4 & \textcolor{red}{\textbf{88.5}} & 85.5 & 88.0\\
IMDB-BINARY & 66.8 & 65.5 & 69.7 & 69.6 & 71.2 & 71.8 & 67.3 & 70.6 & \textcolor{red}{\textbf{75.4}}\\
IMDB-MULTI & 43.4 & 42.5 & 46.4 & 48.7 & 48.6 & 45.8 & 45.3 & 47.1 & \textcolor{red}{\textbf{48.8}} \\
REDDIT-5K & 47.6 & 52.2 & 51.7 & 53.8 & 56.2 & 53.5 & 54.7 & 58.7 & \textcolor{red}{\textbf{59.3}} \\
REDDIT-12K & 38.5 & 43.3 & 45.7 & \textcolor{red}{\textbf{48.3}} & 47.6 & 43.7 & 42.1 & 45.2 & 44.5\\
\hline
Average & 63.41 & 64.3 & 65.23 & 68.29 & 68.64 & 66.86 & 66.83 & 69.81 & \textcolor{red}{\textbf{70.66}}\\
\hline
\end{tabular}
}
\end{table*}

\begin{table*}[htbp]
    \centering
    
{\scriptsize 
\caption{Graph classification accuracy of GIN and RetGK on graph benchmarks with the same nested cross validation setup}
\begin{tabular}{ccccccccccc}
\hline
 & NCI1 & NCI109 & PTC & PROTEIN & DD & MUTAG & IMDB-BIN & IMDB-MULTI & Reddit5K & Reddit12K\\
\hline
RetGK & 84.5$\pm$0.2 & 84.8$\pm$0.2 & 62.9$\pm$1.6 & 75.4$\pm$0.6 & 81.6$\pm$0.4 & 90.0$\pm$1.1 & 72.3$\pm$1.0 & 47.7$\pm$0.4 & 55.8 $\pm$0.5 & 48.5$\pm$ 0.2\\
GIN & 82.4$\pm$1.6 & 86.5$\pm$1.5 & 67.8$\pm$6.5 & 76.7$\pm$2.6 & 81.1$\pm$2.5 & 89.0$\pm$7.5 & 75.6$\pm$5.3 & 52.4$\pm$3.1  & 57.2$\pm$1.5 &           47.9$\pm$ 2.1 \\
\hline
\end{tabular}
}
    
    \label{tbl:GINRETGK}
\end{table*}

We use the same setup as our WKPI-framework to train these two metrics, and use their resulting kernels for SVM to classify the benchmark graph datasets. WKPI-framework outperforms the existing approaches and alternative metric learning methods on all datasets except \textbf{MUTAG}. WKPI-kM (i.e, WKPI-kmeans) and KWPI-kC (i.e, WKPI-kcenter) improve the accuracy by $3.9\% - 11.9\%$ and $5.4 \% - 13.5\%$, respectively. Besides, we show results by another experimental setup. In 10-fold cross validation, choose $m$ and $\sigma$ leading to the smallest cost function value, then evaluate the classifier on the test set. Repeat this process 10 times. That is, $m$ and $\sigma$ are not the hyperparameters of the SVM classifiers, but are determined by the metrics learning. We refer to these two approaches as WKPI-kM and WKPI-kC in accordance with the initialization methods. The classification accuracy of all these methods are reported in Table \ref{tbl:persbenchmark}. 

\end{document}